\documentclass[submission,copyright,creativecommons]{eptcs}
\providecommand{\event}{AiML 2026} 

\usepackage{aiml26}

\usepackage{iftex}

\usepackage{tikz-cd}
\usepackage{tikz}
\usepackage{amsmath,amssymb,xcolor,amsthm}
\usetikzlibrary{decorations.pathmorphing, decorations.markings}
\usepackage[all]{xy}
\usepackage{ifthen,xspace}
\usepackage{graphicx,color}
\usepackage{MnSymbol}

\ifpdf
  \usepackage{underscore}         
  \usepackage[T1]{fontenc}        
\else
  \usepackage{breakurl}           
\fi

\title{Non-finite Axiomatizability of Generalized Medvedev Logics}
\author{Han Xiao
\institute{The Tsinghua-UvA JRC for Logic\\ Department of Philosophy\\  Tsinghua University\\ Beijing, China}
\email{han\_xiao@mail.tsinghua.edu.cn}
}

\newcommand{\titlerunning}{Generalized Medvedev Logics}
\newcommand{\authorrunning}{H. Xiao}

\hypersetup{
  bookmarksnumbered,
  pdftitle    = {\titlerunning},
  pdfauthor   = {\authorrunning},
  pdfsubject  = {EPTCS},               
  pdfkeywords = {keyword1, keyword2} 
}

\begin{document}
\maketitle

\begin{abstract}
We introduce a generalized form of Medvedev logics obtained by removing the greatest element from finite products of rooted Kripke frames with a top. We show that, before removing the top, the intermediate logic characterized by such finite products is exactly $\mathsf{KC}$. Classical Medvedev logic is characterized by topless products of $2$-chains, and a theorem of Maksimova, Skvortsov and Shehtman establishes that it is not finitely axiomatizable. Motivated by this result, Nick Bezhanishvili conjectured that non-finite axiomatizability extends to topless products of arbitrary finite chains and, more generally, to topless products of finite rooted frames with a top. We prove that every such generalized Medvedev logic is not finitely axiomatizable, thereby settling both conjectures in the affirmative. In 2003, van Benthem, Guram Bezhanishvili, and Gehrke introduced $\mathsf{Cheq}$, the logic of chequered sets, and we show that whenever $\mathsf{Cheq}$ is a sublogic of a generalized Medvedev logic, the latter is not finitely axiomatizable over $\mathsf{Cheq}$. Finally, we investigate the order structure of generalized Medvedev logics. We prove that there are at least countably many distinct generalized Medvedev logics and that no least such logic exists. These results extend the classical theory of Medvedev logic and clarify the behaviour of intermediate logics generated by topless product constructions.
\end{abstract}

\section{Introduction}
\subsection{General motivation}
In 1962, Medvedev introduced the logic of finite problems $\mathsf{Med}$ \cite{Med62} as a formalization of Kolmogorov's \cite{Kol32} interpretation of intuitionistic logic as a calculus of problems.
One of the most important results about \emph{Medvedev logic} $\mathsf{Med}$ was obtained by Maksimova, Skvortsov, and Shehtman, who proved in 1979 that $\mathsf{Med}$ is not finitely axiomatizable \cite[Corollary 5]{MSS79}.

Beyond its role in the theory of intermediate logics, $\mathsf{Med}$ also arises in the study of \emph{modal logics of forcing}. This line of research, initiated by Hamkins and L{\"o}we \cite{HL08}, investigates the modal logic of the class of models of set theory with the forcing relation. In particular, determining the \emph{modal logics of c.c.c.\ forcing} has become a central problem in this area. A notable connection with Medvedev logic was established by Hamkins, Leibman, and L{\"o}we, who proved that the modal logic of $\omega_1$-preserving forcing is contained in $\mathsf{S4.tBA}$, the least modal companion of $\mathsf{Med}$ \cite[Theorem~36]{HLL15}. They further conjectured that $\mathsf{S4.tBA}$ also provides an upper bound for the modal logic of c.c.c.\ forcing.

A useful structural perspective on Medvedev logic is that it is characterized by \emph{topless products} of the $2$-chain. More precisely, the Medvedev frame $\mathbf{P}_0(n)$ is obtained from the $n$-fold product of the $2$-chain by removing its greatest element. 
From this perspective, it is natural to ask whether these constructions lead to new intermediate logics with similar structural properties. 
In particular, one may replace the $2$-chain by more general finite frames and consider the topless products obtained by removing the greatest element from their finite products. This idea was proposed by Nick Bezhanishvili, who asked whether the classical non-finite axiomatizability result for $\mathsf{Med}$ extends to such generalized constructions. 
More specifically, he conjectured that non-finite axiomatizability should already hold for topless products of arbitrary  non-singleton finite chains and, more generally, for topless products of arbitrary non-singleton finite rooted frames with a top.
These considerations lead naturally to a broader class of intermediate logics that we call \emph{generalized Medvedev logics}. 
Studying these logics provides a systematic way to understand how structural features of Kripke frames-such as branching and product constructions-affect the axiomatizability. 
 
\subsection{Outline of the paper}
We first study the logics characterized by finite 
products of a non-singleton finite rooted frame 
with a top. We prove that the intermediate 
logic characterized by these products is always the weak excluded middle logic $\mathsf{KC}$ (Theorem~\ref{thm:pifkc}). 

In contrast, once the top element is removed, the resulting generalized Medvedev logics exhibit non-finite axiomatizability. 
As a first structural observation, we show that the classical Medvedev logic is the largest 
generalized Medvedev logic (Lemma~\ref{thm:medmaxpift}). 
The main result of the paper proves that every generalized Medvedev logic is not finitely axiomatizable, thereby confirming conjectures of Bezhanishvili for non-trivial finite distributive lattices (Theorem~\ref{thm:chaincasemed}) and, more generally, for arbitrary non-singleton finite rooted frames with a top (Theorem~\ref{thm:generalcasenick}).

Beyond this central result, we further analyze the relationship between 
generalized Medvedev logics and other well-studied intermediate logics. In particular, we extend Fontaine's theorem by proving that whenever $\mathsf{Cheq}$ is a sublogic of a generalized Medvedev logic, the latter is not finitely axiomatizable over $\mathsf{Cheq}$ (Theorem~\ref{pro:genfontaine}).

Finally, we investigate the order structure of generalized Medvedev logics, showing that there are at least countably many pairwise distinct such 
logics (Theorem~\ref{thm:countablygen}) and that no least generalized Medvedev logic exists (Theorem~\ref{thm:noleastgen}).

\section{Definitions}
\label{sec:def}

Throughout this paper, we assume familiarity with the basic theory of intermediate logics. 
We write $\mathsf{IPC}$ for \emph{intuitionistic propositional calculus}. 
By an (intuitionistic Kripke) frame we mean a poset. We use the terms `frame' and `poset' interchangeably and all posets considered are finite.

Given a frame $\mathbf{F}$, we write $\mathsf{Log}(\mathbf{F})$ for the propositional logic consisting of all formulas valid on the frame $\mathbf{F}$; if
$\mathsf{L} \subseteq \mathsf{Log}(\mathbf{F})$,
we call $\mathbf{F}$ an \emph{$\mathsf{L}$-frame}. If $\mathcal{C}$ is a class of frames, we write $\mathsf{Log}(\mathcal{C})$ for the intermediate logic consisting of all formulas valid on every frame of $\mathcal{C}$; in this case, we say that $\mathcal{C}$ \emph{characterizes} the logic.
\subsection{Posets as Kripke frames} 

Let $\mathbf{F} = \langle F,\leq \rangle$ be a finite poset; we write $x<y$ if $x\leq y$ and $x\neq y$. An element $x \in F$ is called \emph{maximal} if there is no $y \in F$ such that $x < y$; it is called a \emph{top} (or \emph{greatest element}) if everything is below it. A top is necessarily unique.
We say that $\mathbf{F}$ is a frame \emph{with a top} if it contains a top.
If $x<y$, we call $y$ an \emph{immediate successor} of $x$ if there is no $z$ such that $x<z<y$. The \emph{branching degree} of $x$, denoted by $b(x)$, is defined as the number of immediate successors of $x$. 

The notation $x{\uparrow}$ (resp.\ $x{\downarrow}$) denotes the \emph{upset} $\{y\in F : x\leq y\}$ (resp.\ the \emph{downset} $\{y\in F : x\geq y\}$). A subframe $\mathbf{S}=\langle S, \leq \rangle$ of $\mathbf{F} = \langle F,\leq \rangle$ is called a \emph{generated subframe} if $S$ is upward closed in $F$; we say that $\mathbf{S}$ is \emph{generated by $x$} if $S = x{\uparrow}$. If $\mathbf{F}$ is generated by $x$, then $\mathbf{F}$ is called \emph{rooted} and $x$ the \emph{root} of $\mathbf{F}$. For $n \ge 1$, let $[n] := \{1,2,\dots,n\}$. The \emph{$n$-chain} is the frame $\mathbf{C}(n) := \langle [n], \leq \rangle$, where $\leq$ is the usual order on $[n]$. 
The \emph{depth} of a frame $\mathbf{F}$ is the length of the longest 
chain in $\mathbf{F}$. 
For $x \in F$, the \emph{depth of $x$}, denoted by $d(x)$, is the depth of the subframe 
generated by $x$.

Let $\mathbf{F}=\langle F,\le_0\rangle$ and 
$\mathbf{G}=\langle G,\le_1\rangle$ be posets. 
Their \emph{product} is the poset 
$\mathbf{F}\times\mathbf{G}
:= \langle F\times G,\le\rangle$, 
where the order $\le$ is defined coordinatewise by $(x,y)\le (x',y')$ if and only if $x\le_0 x'$ and $y\le_1 y'$.
For a poset $\mathbf{F}$, we write 
$\mathbf{F}^2 := \mathbf{F}\times\mathbf{F}$ and define inductively
$\mathbf{F}^n := \mathbf{F}^{\,n-1}\times\mathbf{F}$. Now let $\mathbf{F}$ be a finite frame with a top.
Its \emph{topless frame} $\mathbf{F}^{-}$ is the frame obtained from 
$\mathbf{F}$ by removing its top element.
For $n\ge1$, the \emph{$n$-fold topless product} of $\mathbf{F}$ is $\mathbf{F}^{n-} := (\mathbf{F}^n)^{-}$, that is, the topless frame of the product $\mathbf{F}^n$.

We write $\mathbf{F}\oplus\mathbf{G}$ for their \emph{linear sum}, 
obtained by placing all elements of $\mathbf{G}$ strictly above all 
elements of $\mathbf{F}$. 
We define the iterated finite sum by recursion via $\mathbf{F}\times 1 := \mathbf{F}$ and $\mathbf{F}\times (n+1) := (\mathbf{F}\times n)\oplus \mathbf{F}$. 
The \emph{vertical sum} of $\mathbf{F}$ and $\mathbf{G}$ is obtained from the linear sum of $\mathbf{F}$ and $\mathbf{G}$ by identifying the top of $\mathbf{F}$ with the root of $\mathbf{G}$ (provided they exist).

\subsection{Morphisms} If $\mathbf{F}$ and $\mathbf{G}$ are posets and $f:F\to G$ is a map, we call $f$ a \emph{p-morphism} if for all $x,y\in F$, if $x\leq y$, then $f(x)\leq f(y)$ and
if $f(x)\leq u$, then there is some $y\geq x$ such that $f(y) = u$. If $f$ is onto, then we call $\mathbf{G}$ a \emph{p-morphic image} of $\mathbf{F}$.

\begin{theorem}[Folklore]
If $\mathbf{G}$ is a $p$-morphic image of $\mathbf{F}$, then $\mathbf{F}\vDash \varphi \text{ implies } \mathbf{G}\vDash \varphi$, for any formula $\varphi$.
\end{theorem}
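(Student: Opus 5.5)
We argue by contraposition: if $\mathbf{G}\not\vDash\varphi$, then $\mathbf{F}\not\vDash\varphi$. The idea is to pull a refuting valuation on $\mathbf{G}$ back along the p-morphism and show, by induction on formulas, that truth is preserved pointwise.

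Let $f:F\to G$ be an onto p-morphism and let $V$ be a valuation on $\mathbf{G}$, so each $V(p)$ is an upset of $G$. Define the valuation $V'$ on $\mathbf{F}$ by $V'(p) := f^{-1}[V(p)]$. This is an upset of $F$: if $x\in V'(p)$ and $x\le y$, then monotonicity gives $f(x)\le f(y)$, and since $V(p)$ is an upset, $f(y)\in V(p)$.

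The key claim is that for every formula $\psi$ and every $x\in F$ we have $x\vDash_{V'}\psi$ iff $f(x)\vDash_V\psi$. We prove this by induction on $\psi$.

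\begin{itemize}
\item The atomic case holds by the definition of $V'$.
\item The cases $\bot$, $\wedge$ and $\vee$ are immediate from the induction hypothesis, since these connectives are evaluated pointwise.
\item The case $\psi=\alpha\to\beta$ is the only one using the p-morphism conditions in earnest.
\begin{itemize}
\item Suppose $f(x)\vDash\alpha\to\beta$, and let $y\ge x$ with $y\vDash\alpha$. By the induction hypothesis $f(y)\vDash\alpha$. Since $f(y)\ge f(x)$, we get $f(y)\vDash\beta$, and hence $y\vDash\beta$.
\item Conversely, suppose $x\vDash\alpha\to\beta$, and let $u\ge f(x)$ with $u\vDash\alpha$. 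The back condition yields $y\ge x$ with $f(y)=u$. By the induction hypothesis $y\vDash\alpha$, so $y\vDash\beta$, and therefore $u=f(y)\vDash\beta$.
\end{itemize}
\end{itemize}

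Now suppose $\mathbf{G}\not\vDash\varphi$. Then there are a valuation $V$ and a point $u\in G$ with $u\not\vDash_V\varphi$. Since $f$ is onto, we can pick $x\in F$ with $f(x)=u$. By the claim $x\not\vDash_{V'}\varphi$, so $\mathbf{F}\not\vDash\varphi$.

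The main (though modest) obstacle is the implication step. The forth condition (monotonicity) is needed for one direction and the back condition for the other, while surjectivity is used only at the very end.
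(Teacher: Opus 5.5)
Your proof is correct and complete. It is the standard argument: pull back a refuting valuation along $f$, prove $x\vDash_{V'}\psi$ iff $f(x)\vDash_V\psi$ by induction, and use surjectivity to transfer the refutation. Monotonicity gives the upset property and one half of the $\to$ case, and the back condition gives the other half. The paper gives no proof of its own, since it cites the result as folklore, so there is nothing further to compare.
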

 
\begin{theorem}[Jankov-de Jongh Theorem]
	Given a finite rooted frame $\mathbf{F}$, there exists a formula $\chi(\mathbf{F})$ such that for every frame $\mathbf{G}$, $$\mathbf{G} \nvDash \chi(\mathbf{F})
\quad\text{iff}\quad
\mathbf{F}
\text{ is a $p$-morphic image of a generated subframe of }
\mathbf{G}.$$The formula $\chi(\mathbf{F})$ is called the Jankov-de Jongh formula of $\mathbf{F}$.
\end{theorem}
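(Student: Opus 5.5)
Since $\mathbf{F}$ is finite and rooted, I would construct $\chi(\mathbf{F})$ as the Jankov formula of the Heyting algebra $\mathbf{A}=\mathrm{Up}(\mathbf{F})$ of upsets of $\mathbf{F}$. Let $r$ be the root. In $\mathbf{A}$ the set $s:=F\setminus\{r\}$ is the unique coatom, because every proper upset omits $r$. For each $a\in A$ I introduce a variable $p_a$. I let $\Gamma$ be the conjunction of all formulas
\[
p_{a\wedge b}\leftrightarrow (p_a\wedge p_b),\qquad p_{a\vee b}\leftrightarrow (p_a\vee p_b),\qquad p_{a\to b}\leftrightarrow (p_a\to p_b),\qquad p_{\emptyset}\leftrightarrow\bot,
\]
and I set $\chi(\mathbf{F}):=\Gamma\to p_s$. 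The proof then checks the two directions of the equivalence semantically, working directly with frames.

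\emph{($\Leftarrow$)} Suppose $f:x{\uparrow}\to F$ is an onto p-morphism, where $x\in G$ (generated subframes of generated subframes are generated, so it suffices to treat the case of a subframe generated by a point). I would put $V(p_a):=f^{-1}(a)$, extending it arbitrarily outside $x{\uparrow}$. Since truth at points of $x{\uparrow}$ depends only on $x{\uparrow}$, the standard fact that p-morphisms preserve truth of formulas under pulled-back valuations gives $y\vDash p_a$ iff $f(y)\in a$, together with the matching clauses for $\wedge,\vee,\to,\bot$. Hence $\Gamma$ holds throughout $x{\uparrow}$. Choose $x'\in x{\uparrow}$ with $f(x')=r$, which exists by surjectivity. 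Then $x'\vDash\Gamma$ but $x'\nvDash p_s$, because $r\notin s$. So $\mathbf{G}\nvDash\chi(\mathbf{F})$.

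\emph{($\Rightarrow$)} Suppose some valuation and point $x$ satisfy $x\vDash\Gamma$ and $x\nvDash p_s$. By persistence $\Gamma$ holds on all of $x{\uparrow}$. For $y\in x{\uparrow}$ let $\Phi_y:=\{a\in A: y\vDash p_a\}$. The conjuncts of $\Gamma$ show that $\Phi_y$ is a prime filter of $\mathbf{A}$:
\begin{itemize}
\item it contains $F$, via $p_{\emptyset\to\emptyset}$;
\item it is closed under $\wedge$ and upward, via the $\wedge$-clause;
\item it is prime, via the $\vee$-clause;
\item it is proper, via the $\bot$-clause.
\end{itemize}
Prime filters of the finite algebra $\mathrm{Up}(\mathbf{F})$ are exactly the principal sets $\{a: z\in a\}$ for $z\in F$. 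So I define $f(y)$ to be the unique $z$ with $\Phi_y=\{a:z\in a\}$. Monotonicity of $f$ follows from persistence. Since $s\notin\Phi_x$, we get $f(x)\notin s$, so $f(x)=r$.

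The main obstacle is the back condition: given $f(y)\le u$, I must find $y'\ge y$ with $f(y')=u$. I would use the $\to$-clause with $a=u{\uparrow}$ and $b=u{\uparrow}\setminus\{u\}$. The upset $a\to b$ consists of the points $w$ with $w{\uparrow}\cap a\subseteq b$, so it omits every point below $u$; in particular $f(y)\notin a\to b$. Hence $y\nvDash p_a\to p_b$, so some $y'\ge y$ has $f(y')\in a\setminus b=\{u\}$, as required. Surjectivity then follows from $f(x)=r$ and the back condition. Thus $f$ is an onto p-morphism from the generated subframe $x{\uparrow}$ onto $\mathbf{F}$, which completes the equivalence.
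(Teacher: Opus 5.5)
The paper gives no proof of this theorem; it is quoted as a standard result, so there is no in-paper argument to match. Your proof is correct and complete. Each $\Phi_y$ is indeed a prime filter, and prime filters of $\mathrm{Up}(\mathbf{F})$ correspond exactly to points. The choice $a=u{\uparrow}$, $b=u{\uparrow}\setminus\{u\}$ (which is an upset) in the $\to$-clause gives precisely the back condition, and $f(x)=r$ then yields surjectivity.

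Compared with the usual textbook route, the difference is where the work is done. The standard proof is algebraic. Take a finite subdirectly irreducible Heyting algebra $\mathbf{B}$ with second-greatest element $s$, and a valuation $v$ refuting $\chi(\mathbf{B})$ on a Heyting algebra $\mathbf{A}$. Let $\nabla$ be the filter generated by $v(\Gamma)$. Then $a\mapsto[v(p_a)]_\nabla$ is a homomorphism from $\mathbf{B}$ into $\mathbf{A}/\nabla$, and it is injective because its value at $s$ is not $1$. One then translates ``subalgebra of a homomorphic image'' into ``p-morphic image of a generated subframe'' by duality.

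Your argument performs that duality by hand, pointwise: $\Phi_y$ is the point of $\mathbf{F}$ dual to $y$. This gives you an elementary, duality-free proof that works verbatim for arbitrary, even infinite, Kripke frames $\mathbf{G}$. The algebraic route gives the stronger statement for all Heyting algebras, i.e.\ descriptive frames. However, for an infinite Kripke frame it produces a generated subframe of the prime-filter dual of $\mathrm{Up}(\mathbf{G})$ rather than of $\mathbf{G}$ itself, so an extra step is needed there. Since all frames in the paper are finite, either route suffices for its purposes.

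Two small points of presentation. First, in the direction ($\Leftarrow$) you cannot extend $V(p_a)$ \emph{arbitrarily} outside $x{\uparrow}$, because it must remain an upset of $\mathbf{G}$. Simply take $V(p_a):=f^{-1}(a)$; this is already an upset of $\mathbf{G}$ because $x{\uparrow}$ is upward closed. Second, the reduction to a point-generated subframe is really justified by choosing $x$ with $f(x)=r$ and noting that $f\upharpoonright x{\uparrow}$ is a p-morphism onto $r{\uparrow}=F$. The fact that generated subframes of generated subframes are generated covers only half of this. In that direction the reduction is not even needed, since you can pull back the valuation on the whole generated subframe.
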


\subsection{Medvedev logic}
In 1932, Kolmogorov \cite{Kol32} suggested a constructive interpretation of intuitionistic logic as a calculus of problems. To make precise the description proposed by Kolmogorov, Medvedev established the foundational framework for the logic of finite problems in \cite{Med62}. We recall the definition of the \emph{Medvedev frame} and the corresponding \emph{Medvedev logic}.
\begin{definition}
For $n \ge 1$, the \emph{Medvedev frame on $n$ atoms} $\mathbf{P}_0(n)$ is defined by
$\mathbf{P}_0(n) 
:= \bigl\langle 
\{ X \subseteq [n] : X \neq \varnothing \}, 
\supseteq 
\bigr\rangle$.
The \emph{Medvedev logic} $\mathsf{Med}$ is the intermediate logic of all Medvedev frames, that is, $\mathsf{Med}:= \mathsf{Log}\bigl(\{ \mathbf{P}_0(n) : n \in \omega \}\bigr)$.
\end{definition}
\begin{figure}[htbp]
\centering
\begin{tikzpicture}[
scale=0.35,
every node/.style={
    circle,
    fill=black,
    minimum size=3.5pt,
    inner sep=0pt
},
edge/.style={
    line width=0.5pt,
    shorten >=3pt,
    shorten <=3pt
}
]

\begin{scope}[xshift=0cm] 
    \node (a1) at (0,1) {};
    
\end{scope}

\begin{scope}[xshift=8cm] 
    \node (b1) at (0,1) {};
    \node (b2) at (-2.45,4) {};
    \node (b3) at (2.45,4) {};

    \draw[edge] (b1) -- (b2);
    \draw[edge] (b1) -- (b3);
\end{scope}

\begin{scope}[xshift=16cm] 
    \node (c1) at (0,1) {};
    \node (c2) at (-2.5,3.5) {};
    \node (c3) at (0,3.5) {};
    \node (c4) at (2.5,3.5) {};
    \node (c5) at (-2.5,6) {};
    \node (c6) at (0,6) {};
    \node (c7) at (2.5,6) {};

    \draw[edge] (c1) -- (c2);
    \draw[edge] (c1) -- (c3);
    \draw[edge] (c1) -- (c4);
    \draw[edge] (c2) -- (c5);
    \draw[edge] (c2) -- (c6);
    \draw[edge] (c3) -- (c5);
    \draw[edge] (c3) -- (c7);
    \draw[edge] (c4) -- (c6);
    \draw[edge] (c4) -- (c7);
\end{scope}

\end{tikzpicture}

\caption{Medvedev frames.}
\label{fig:med}
\end{figure}
\begin{lemma}[Maksimova, Skvortsov, \& Shehtman; cf.\ {\cite[Lemma 4]{MSS79}}]
If $\mathbf{F}$ is a finite rooted frame with a top, then $\mathbf{F}$ is a $\mathsf{Med}$-frame.		
\label{lem:topmedframe}
\end{lemma}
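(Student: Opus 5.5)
We will show that $\mathbf{F}$ is a $p$-morphic image of a generated subframe of some Medvedev frame $\mathbf{P}_0(n)$. Since $\mathsf{Med}\subseteq\mathsf{Log}(\mathbf{P}_0(n))$, validity is preserved under generated subframes, and by the Folklore theorem it is also preserved under $p$-morphic images, this gives $\mathsf{Med}\subseteq\mathsf{Log}(\mathbf{F})$.

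\emph{Choice of $n$ and the candidate map.} Let $\mathbf{F}=\langle F,\le\rangle$ be finite and rooted with top $t$, and set $n:=|F|$. Identify $[n]$ with $F$, so the points of $\mathbf{P}_0(n)$ are the nonempty sets $X\subseteq F$, ordered by reverse inclusion. The idea is to let a nonempty set $X$ encode the ``infimum-like'' point of $X$. A clean choice is to work in the subframe generated by the point $F$ itself, which is all of $\mathbf{P}_0(n)$, and to define
\[ f(X):=\text{the greatest } y\in F \text{ such that } y\le x \text{ for all } x\in X, \]
provided $F$ is a lattice. Since a general finite poset with top and root need not be a lattice, I would instead use the following definition, which works in general:
\[ f(X):=\begin{cases} x & \text{if } X\subseteq x{\uparrow} \text{ and } x\in X,\\ r & \text{otherwise,}\end{cases} \]
where $r$ is the root. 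That is, $X$ is mapped to its least element when $X$ has one, and to the root otherwise. In particular singletons $\{x\}$, which are the maximal points, are sent to $x$. This is the standard Medvedev-style map.

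\emph{Verification.} Surjectivity is clear from the singletons. For monotonicity, suppose $X\supseteq Y$. If $X$ has a least element $x$ and $Y$ has a least element $y$, then $x\le y$ because $y\in X$. If $X$ has no least element, its image is $r$, which lies below everything. If $X$ has a least element $x$ but $Y$ does not, then $f(Y)=r$, and monotonicity would require $x\le r$. This fails when $x\neq r$, so the map must be modified.

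\emph{Repair: the main obstacle.} This is the step I expect to be hardest. I would try the following fix: send $X$ to $x$ if $X\subseteq x{\uparrow}$ with $x\in X$, and otherwise to $t$ when $X$ is a singleton, but this does not resolve the problem above. The better repair uses the top and the well-known fact, from the MSS proof, that $\mathbf{P}_0(n)$ maps onto any finite rooted frame with top via sets of maximal chains. Take $n$ to be the number of maximal chains (paths from $r$ to $t$), and regard $X$ as a set of such paths. Define $f(X)$ to be the largest point of $F$ lying on every path in $X$.

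This is well defined because every path in $X$ contains $r$, and the common points of all paths in $X$ form a chain, since they all lie on a single path; so this chain has a largest element. Monotonicity holds because $X\supseteq Y$ means fewer constraints on $Y$, so the common chain of $Y$ contains that of $X$ and its largest element is at least as high. Moreover $f(\{c\})=t$ for every path $c$, since $t$ lies on every path, and that is exactly where the top is used.

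For the back condition, suppose $f(X)=a\le u$. Choose a path $c$ through $a$ and $u$, obtained by extending any path through $a$ with a path from $a$ through $u$. Then put $Y:=\{c\}\cup\{\text{paths through } u\}$, which gives $f(Y)=u$. To get $Y\subseteq X$, I would instead take $Y$ to be the set of paths in $X$ passing through $u$. This set is nonempty, because paths in $X$ can be rerouted above $a$; more precisely, I would ensure this by first proving the result for the generated subframe rooted at $X=$ all paths and arguing inductively.

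The delicate point is that the paths in $X$ through $u$ may have a common largest point above $u$. This is handled by choosing $Y$ to be all paths in $X$ whose segment between $a$ and $t$ passes through $u$ and otherwise varies freely. Checking this freedom requires closing $X$ under path recombination at common points, and I expect this to be where most of the care is needed.
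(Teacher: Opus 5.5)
\paragraph{Verdict.} There is a genuine gap: the map you end up with is not a $p$-morphism onto $\mathbf{F}$. The paper does not prove this lemma itself. It only cites \cite[Lemma~4]{MSS79}, and the lemma is also the case $n=0$ of Lemma~\ref{lem:MSSsuspension}, which is likewise quoted. Your opening reduction is sound. In fact, by the Jankov--de Jongh theorem, being a $p$-morphic image of some $\mathbf{P}_0(k)$ is also necessary for a finite rooted $\mathsf{Med}$-frame.

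\paragraph{Where the map fails.} Taken literally, ``the largest point of $F$ lying on every path in $X$'' is $t$ for every $X$, because every maximal chain from $r$ to $t$ contains $t$. You observe this yourself when computing $f(\{c\})$. So $f$ is constant, and monotonicity holds only trivially.

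Suppose you meant instead the point where the paths of $X$ first diverge. The map is still neither onto nor satisfies the back condition. For the diamond $r<p,q<t$ there are two paths, so $\mathbf{P}_0(2)$ is sent onto $\{r,t\}$, and from the root nothing reaches $p$ or $q$. In general, if the paths in $X$ leave $a=f(X)$ only through branches avoiding $u$, then no nonempty subset of $X$ is mapped to $u$. Your claim that ``paths in $X$ can be rerouted above $a$'' therefore does not help: rerouted paths are not in $X$, and the witness $Y$ must be a subset of $X$.

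\paragraph{Why the repair cannot work.} The generated subframe $X{\uparrow}$ of $\mathbf{P}_0(n)$ consists of \emph{all} nonempty subsets of $X$, so it is just $\mathbf{P}_0(|X|)$. Starting from the set of all paths simply returns the whole frame. There is also no generated subframe made of recombination-closed sets, since subsets of such sets need not be recombination-closed. What is missing is a map defined on a full $\mathbf{P}_0(k)$ that sends every singleton to $t$ and sends ``spread-out'' sets high rather than low.

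\paragraph{One workable route.}
\begin{enumerate}
\item Obtain a $p$-morphism $g$ from some $\mathbf{C}(2)^k$ onto $\mathbf{F}$. Since $\mathbf{F}=\mathbf{F}^{-}\oplus\mathbf{C}(1)$ is a $p$-morphic image of $\mathbf{T}\oplus\mathbf{C}(1)$, where $\mathbf{T}$ is the unravelling of $\mathbf{F}^{-}$, it suffices to treat frames $\mathbf{T}\oplus\mathbf{C}(1)$ with $\mathbf{T}$ a tree.
\item Do this by induction on the tree. Give each immediate subtree its own block of coordinates, padded with one dummy coordinate. Send the bottom to the root. Send a tuple whose $2$-entries all lie in a single block to that subtree, via the inductive map for that block with the dummy coordinate ignored. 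Send a tuple with $2$-entries in two blocks to $t$.
\item Precompose $g$ with the coordinatewise join $\mathbf{C}(2)^{2k}\to\mathbf{C}(2)^k$. This is a $p$-morphism sending every coatom to the top.
\item Every non-top point lies below a coatom, and the coatom can serve as the witness for $t$. Hence the composite restricted to $\mathbf{C}(2)^{2k-}\cong\mathbf{P}_0(2k)$ is still a $p$-morphism onto $\mathbf{F}$, in the spirit of Lemma~\ref{thm:medmaxpift}.
\end{enumerate}
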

Inheriting such a tradition on the frame for logic of finite problems, the order in our \emph{finite Boolean algebra} here is reverse inclusion $\supseteq$. 
\begin{definition}
For $n \ge 1$, the \emph{Boolean algebra on $n$ generators} $\mathbf{P}(n)$ is defined by $\mathbf{P}(n):=\bigl\langle 
\mathcal{P}([n]), \supseteq \bigr\rangle$, where $\mathcal{P}([n])$ denotes the powerset of $[n]$.
\end{definition}

\subsection{The logic $\mathsf{Cheq}$}

The modal logic of chequered subsets was introduced by 
van Benthem, Guram Bezhanishvili, and Gehrke in 2003, who asked 
whether it is finitely axiomatizable \cite[in particular, p.~343]{vBBG03}. 
This logic plays an important role in the study of modal logics of 
various topological spaces. Litak \cite{Lit04} later denoted its 
intermediate companion by $\mathsf{Cheq}$.
Let $\mathbf{V}$ be the \emph{two-fork}, that is, the poset consisting 
of three elements $\{o,p,q\}$ such that $o$ is the root and $p$ and $q$ 
are incomparable maximal elements. Define recursively 
$\mathbf{V}_1 := \mathbf{V}$ and 
$\mathbf{V}_{n+1} := \mathbf{V}_n \times \mathbf{V}$,
where $\times$ denotes the usual product of posets. 
The logic $\mathsf{Cheq}$ is the intermediate logic characterized by 
the class $\{\mathbf{V}_n : n \in \omega\}$.

\section{Products and generalized Medvedev logics}

Observe that $\mathbf{P}(n)$ is isomorphic to $\mathbf{C}(2)^n$, the $n$-fold product 
of the $2$-chain, and that the Medvedev frame 
$\mathbf{P}_0(n)$ is obtained from $\mathbf{C}(2)^n$ by removing its 
greatest element. 
Thus, Medvedev logic can be viewed as the logic 
characterized by the family of frames obtained as topless products of 
the $2$-chain.

This structural description suggests a natural generalization. 
Instead of taking products of the $2$-chain, one may consider 
finite products of more general frames and remove their top 
elements. In particular, one may replace the $2$-chain by an 
arbitrary finite chain, or more generally by an arbitrary 
non-singleton finite rooted frame with a top. Motivated by this observation, Bezhanishvili proposed two conjectures concerning the logics of such topless products. These conjectures predict that the non-finite axiomatizability known for Medvedev logic should persist for these 
generalized constructions. 

In the remainder of this section we introduce the corresponding 
class of logics, which we call generalized Medvedev logics, 
and investigate their properties.

\begin{proposition}
	The Medvedev frame $\mathbf{P}_0(n)$ is isomorphic to $\mathbf{C}(2)^{n-}$.
\end{proposition}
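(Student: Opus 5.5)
We will write down an explicit order isomorphism and check it directly. Identify $\mathbf{C}(2)^n$ with the set of functions $[n]\to\{1,2\}$ under the pointwise order. Since $\mathbf{P}(n)$ is ordered by reverse inclusion, we send a subset $X\subseteq[n]$ to the tuple
\[
\sigma(X) = (a_1,\dots,a_n), \qquad a_i = 1 \text{ if } i\in X, \qquad a_i = 2 \text{ if } i\notin X .
\]
Equivalently, $\sigma(X)$ is the characteristic function of the complement $[n]\setminus X$, shifted by one.

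The first step is to check that $\sigma\colon \mathcal{P}([n])\to [2]^n$ is a bijection. Its inverse sends a tuple $(a_i)$ to $\{i : a_i = 1\}$.

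The second step is to check that $\sigma$ is an order isomorphism from $\langle \mathcal{P}([n]),\supseteq\rangle$ onto $\mathbf{C}(2)^n$. Suppose $X\supseteq Y$. Then for each $i$, either $i\in Y$, in which case $a_i(X)=1=a_i(Y)$, or $i\notin Y$, in which case $a_i(Y)=2\geq a_i(X)$. So $\sigma(X)\le\sigma(Y)$ coordinatewise. Conversely, suppose $\sigma(X)\le\sigma(Y)$ coordinatewise and $i\in Y$. Then $a_i(Y)=1$, which forces $a_i(X)=1$, so $i\in X$. Hence $X\supseteq Y$, and $\sigma$ is an isomorphism $\mathbf{P}(n)\cong\mathbf{C}(2)^n$.

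The third step is to restrict to the topless frames. An order isomorphism preserves the greatest element, if one exists. In $\mathbf{P}(n)$ the greatest element under $\supseteq$ is $\varnothing$, and indeed $\sigma(\varnothing)=(2,\dots,2)$, which is the top of $\mathbf{C}(2)^n$. Removing the top on both sides therefore gives an isomorphism between $\langle\{X\subseteq[n] : X\neq\varnothing\},\supseteq\rangle=\mathbf{P}_0(n)$ and $(\mathbf{C}(2)^n)^-=\mathbf{C}(2)^{n-}$.

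None of these steps presents a real obstacle. The only point needing care is the direction of the order: $\mathbf{P}(n)$ uses $\supseteq$, so the encoding must send membership $i\in X$ to the bottom value $1$, not the top value $2$. If we used the naive encoding by characteristic functions, the map would be an anti-isomorphism, and it would delete the root of the product instead of its top.
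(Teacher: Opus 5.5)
Your proposal is correct and uses the same map as the paper ($i\in X\mapsto 1$, $i\notin X\mapsto 2$), checked to be a bijection that preserves and reflects the order. The only difference is presentation. You first establish $\mathbf{P}(n)\cong\mathbf{C}(2)^n$ and then remove the top, using $\sigma(\varnothing)=(2,\dots,2)$. The paper defines the map directly on nonempty subsets and calls bijectivity onto $\mathbf{C}(2)^{n-}$ clear.
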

\begin{proof}
Define a map $f: \mathbf{P}_0(n)\to \mathbf{C}(2)^{n-}$ by sending each non-empty subset $X\subseteq [n]$ to the tuple $f(X)=(x_1,\ldots,x_n)$, where $x_i=2$ if $i\notin X$ and $x_i=1$ otherwise. This map is clearly bijective. Moreover, $X \supseteq Y$ if and only if $f(X) \leq f(Y)$, 
so $f$ is an isomorphism of posets.
\end{proof}

\begin{definition}
Let $\mathbf{F}$ be a non-singleton finite rooted frame with a top. We define
$\mathsf{P}_{\mathbf{F}}
:=
\mathsf{Log}\bigl(\{ \mathbf{F}^n : n \in \omega\}\bigr)$ and $\mathsf{TLP}_{\mathbf{F}}
:=
\mathsf{Log}\bigl(\{ \mathbf{F}^{n-} : n \in \omega \}\bigr).$ The logic $\mathsf{TLP}_{\mathbf{F}}$ is called the 
\emph{generalized Medvedev logic over $\mathbf{F}$}.
\end{definition}
Throughout the rest of this paper, unless otherwise stated, 
$\mathbf{F}$ is assumed to be a non-singleton finite rooted frame with a top. The following two results, Theorem~\ref{thm:pifkc} and 
Lemma~\ref{thm:medmaxpift}, provide an initial description 
of the logics $\mathsf{P}_{\mathbf{F}}$ and 
$\mathsf{TLP}_{\mathbf{F}}$. We begin with the case of products. 
The logic characterized by the finite products of $\mathbf{F}$ 
turns out to be the weak excluded middle logic $\mathsf{KC} = \mathsf{IPC} + (\neg p \vee \neg\neg p)$, also known as \emph{Jankov logic}. Recall that $\mathsf{KC}$ is the logic of all finite rooted posets that have a top \cite[Theorem~2.13]{BB22} and, in particular, is characterized by the products of the $2$-chain. The following theorem shows that this result persists for 
finite products of an arbitrary finite rooted frame with a top.  Theorem~\ref{thm:pifkc} was conjectured by Bezhanishvili and the proof presented below is due to the author of the present paper.

\begin{theorem}
$\mathsf{P}_\mathbf{F}=\mathsf{KC}$.
\label{thm:pifkc}	
\end{theorem}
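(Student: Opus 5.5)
We show the two inclusions $\mathsf{KC}\subseteq\mathsf{P}_{\mathbf{F}}$ and $\mathsf{P}_{\mathbf{F}}\subseteq\mathsf{KC}$ separately.

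\emph{The inclusion $\mathsf{KC}\subseteq\mathsf{P}_{\mathbf{F}}$.} This is immediate. If $\mathbf{F}$ is finite and rooted with a top, then so is every $\mathbf{F}^n$: the root is the tuple of roots and the top is the tuple of tops. Since $\mathsf{KC}$ is the logic of all finite rooted posets with a top \cite[Theorem~2.13]{BB22}, each $\mathbf{F}^n$ is a $\mathsf{KC}$-frame.

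\emph{The inclusion $\mathsf{P}_{\mathbf{F}}\subseteq\mathsf{KC}$.} Since $\mathsf{KC}$ is characterized by the products $\mathbf{C}(2)^n$, it suffices to show that each $\mathbf{C}(2)^n$ is a p-morphic image of $\mathbf{F}^n$. The Folklore theorem then gives $\mathsf{Log}(\mathbf{F}^n)\subseteq\mathsf{Log}(\mathbf{C}(2)^n)$, and intersecting over $n$ yields $\mathsf{P}_{\mathbf{F}}\subseteq\mathsf{Log}(\{\mathbf{C}(2)^n\})=\mathsf{KC}$.

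It is enough to find a single p-morphism $g:\mathbf{F}\to\mathbf{C}(2)$ and then take $g^n$ coordinatewise. Let $t$ be the top of $\mathbf{F}$, and define $g(t)=2$ and $g(x)=1$ for $x\neq t$. Then:
\begin{itemize}
\item $g$ is monotone, since only the top is sent to $2$.
\item $g$ is surjective, because $\mathbf{F}$ is non-singleton, so the root differs from $t$.
\item $g$ satisfies the back condition: if $g(x)=1\le u$, we can take $y=x$ when $u=1$ and $y=t\ge x$ when $u=2$; if $g(x)=2$, then $u=2$ and $y=x$ works.
\end{itemize}

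We then check that the coordinatewise map $g^n:\mathbf{F}^n\to\mathbf{C}(2)^n$ is a p-morphism. Monotonicity and surjectivity are clear. For the back condition, suppose $g^n(\bar x)\le \bar u$. Choose $y_i\ge x_i$ with $g(y_i)=u_i$ in each coordinate independently; then $\bar y\ge\bar x$ in the product order and $g^n(\bar y)=\bar u$. In general, products of p-morphisms are p-morphisms for exactly this reason, because the product order is coordinatewise.

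No step here is a serious obstacle. The only point needing care is the characterization of $\mathsf{KC}$ by the products $\mathbf{C}(2)^n$, which the paper takes as known. Should a self-contained argument be required, we would use that every finite rooted frame with a top is a p-morphic image of some $\mathbf{C}(2)^n$; this is essentially the content behind Lemma~\ref{lem:topmedframe}. Composing with $g^n$ would then show that every such frame is a p-morphic image of some $\mathbf{F}^n$, which again gives $\mathsf{P}_{\mathbf{F}}\subseteq\mathsf{KC}$.
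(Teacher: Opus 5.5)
Your proposal is correct and follows essentially the same route as the paper. You use the same map $\mathbf{F}\to\mathbf{C}(2)$ sending only the top to $2$, extend it coordinatewise to a p-morphism $\mathbf{F}^n\to\mathbf{C}(2)^n$, and invoke $\mathsf{KC}=\bigcap_n\mathsf{Log}(\mathbf{C}(2)^n)$; your explicit back-condition check and your remark on a self-contained fallback simply add detail that the paper calls ``straightforward.''
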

\begin{proof}
Since $\mathsf{KC}$ is the intermediate logic of all finite rooted frames with a top and each $\mathbf{F}^n$ has a top, we immediately obtain $\mathsf{P}_\mathbf{F}\supseteq\mathsf{KC}$. Let $t$ denote the top of $\mathbf{F}$. Define a map $f_1: \mathbf{F} \to \mathbf{C}(2)$ by $$f_1(x)=\begin{cases}
	2, & \text{if } x=t,\\		
	1, & \text{otherwise}.\\
\end{cases}$$ 

A straightforward verification shows that $f_1$ is a $p$-morphism. For each $n\ge 1$, define $f_n:\mathbf{F}^n\to\mathbf{C}(2)^n$ by $f_n\bigl((x_1,x_2,\ldots,x_n)\bigr)=\bigl(f_1(x_1),f_1(x_2),\ldots,f_1(x_n)\bigr).$ It is easy to check that $f_n$ is a $p$-morphism from $\mathbf{F}^n$ onto $\mathbf{C}(2)^n$. Consequently, $\mathsf{Log}(\mathbf{F}^n)\subseteq
\mathsf{Log}(\mathbf{C}(2)^n)$ and thus $\mathsf{P}_\mathbf{F}=\bigcap_{n\in \omega} \mathsf{Log}(\mathbf{F}^n) \subseteq \bigcap_{n\in \omega}\mathsf{Log}\bigl(\mathbf{C}(2)^n\bigr)$. Since $\mathsf{KC} = \bigcap_{n\in \omega}\mathsf{Log}\bigl(\mathbf{C}(2)^n\bigr)$ (see, e.g., \cite[Corollary 1]{MSS79}), we obtain $\mathsf{P}_\mathbf{F} \subseteq \mathsf{KC}$. Therefore, $\mathsf{P}_{\mathbf{F}} = \mathsf{KC}$.
\end{proof}

From the perspective of topless products $\mathbf{F}^{n-}$, the classical Medvedev logic is the largest generalized Medvedev logic with respect to inclusion.

\begin{lemma}
$\mathsf{TLP}_\mathbf{F}\subseteq\mathsf{Med}$.
\label{thm:medmaxpift}
\end{lemma}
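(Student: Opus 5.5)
The plan is to mirror the proof of Theorem~\ref{thm:pifkc}, now in the topless setting. We use the same map $f_1:\mathbf{F}\to\mathbf{C}(2)$ that sends the top $t$ to $2$ and everything else to $1$. The goal is to show that the restricted product map
$f_n:\mathbf{F}^{n-}\to\mathbf{C}(2)^{n-}\cong\mathbf{P}_0(n)$ is a surjective $p$-morphism. Then the Folklore theorem gives $\mathsf{Log}(\mathbf{F}^{n-})\subseteq\mathsf{Log}(\mathbf{P}_0(n))$ for every $n$. Intersecting over $n$ yields
\[
\mathsf{TLP}_\mathbf{F}=\bigcap_n\mathsf{Log}(\mathbf{F}^{n-})\subseteq\bigcap_n\mathsf{Log}(\mathbf{P}_0(n))=\mathsf{Med}.
\]

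The steps are as follows. First, the map is well defined on the topless part. The top of $\mathbf{F}^n$ is $(t,\dots,t)$, and it is the unique element mapped to $(2,\dots,2)$, because $f_1^{-1}(2)=\{t\}$. So $f_n$ sends $\mathbf{F}^{n-}$ into $\mathbf{C}(2)^{n-}$, and restricting to $\mathbf{F}^{n-}$ does exactly the right thing. Second, the map is surjective, since $f_1$ is onto and the tuple $(2,\dots,2)$ has already been handled. Third, monotonicity is inherited coordinatewise from $f_1$.

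The key step is the back condition, because removing the top could in principle destroy witnesses. Take $x=(x_1,\dots,x_n)\in\mathbf{F}^{n-}$ and $u\in\mathbf{C}(2)^{n-}$ with $f_n(x)\le u$. We construct $y$ coordinatewise:
\begin{itemize}
\item if $u_i=2$, put $y_i=t$;
\item if $u_i=1$, then $f_1(x_i)=1$, so $x_i\neq t$, and we put $y_i=x_i$.
\end{itemize}
Then $y\ge x$ and $f_n(y)=u$. Moreover $y$ is not the top of $\mathbf{F}^n$: since $u\neq(2,\dots,2)$, some coordinate has $u_i=1$, and there $y_i=x_i\neq t$. Hence $y\in\mathbf{F}^{n-}$.

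I expect this witness to survive the removal of the top without difficulty. This is the only point where the argument differs from that of Theorem~\ref{thm:pifkc}, so it is where care is needed.
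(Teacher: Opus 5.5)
Your proposal is correct and follows essentially the same route as the paper. Both arguments restrict the coordinatewise map $f_n$ to $\mathbf{F}^{n-}$, check that the restriction is a surjective $p$-morphism onto $\mathbf{C}(2)^{n-}\cong\mathbf{P}_0(n)$, and intersect over $n$. The only difference is in the back condition: you build the witness explicitly coordinate by coordinate, whereas the paper just observes that the top of $\mathbf{F}^n$ is the only preimage of the top of $\mathbf{C}(2)^n$, so any witness the unrestricted $f_n$ supplies for a non-top target already lies in $\mathbf{F}^{n-}$.
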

\begin{proof}
We use the maps $f_n$ constructed in the proof of 
Theorem~\ref{thm:pifkc}. Recall that $f_n$ is a $p$-morphism from 
$\mathbf{F}^n$ onto $\mathbf{C}(2)^n$ and the Medvedev frame $\mathbf{P}_0(n)$ is the topless frame of $\mathbf{C}(2)^n$. Define $f_n^0:\mathbf{F}^{n-}\to \mathbf{P}_0(n)$ by restriction $f_n^0 := f_n\!\upharpoonright\,\mathbf{F}^{n-}$. The only element of $\mathbf{F}^n$ mapped to the top of 
$\mathbf{C}(2)^n$ by $f_n$ is the top of $\mathbf{F}^n$. 
Hence, the restriction $f_n^0$ remains a $p$-morphism. Thus, $\mathsf{Log}(\mathbf{F}^{n-})\subseteq\mathsf{Log}(\mathbf{P}_0(n))$. Taking intersections over all $n$, we obtain $\mathsf{TLP}_{\mathbf{F}}=\bigcap_{n\in\omega} \mathsf{Log}(\mathbf{F}^{n-})\subseteq\bigcap_{n\in \omega} \mathsf{Log}(\mathbf{P}_0(n))=\mathsf{Med}$. 
\end{proof}

This paper resolves two conjectures concerning the finite axiomatizability of generalized Medvedev logics, which were suggested by Nick Bezhanishvili in personal communication.

\begin{conjecture}[Bezhanishvili's First Conjecture]
If $\mathbf{F} = \mathbf{C}(n)$ is a finite chain with $n \ge 2$, 
then $\mathsf{TLP}_{\mathbf{F}}$ is not finitely axiomatizable.
\label{con:chaincase}
\end{conjecture}

\begin{conjecture}[Bezhanishvili's Second Conjecture]
$\mathsf{TLP}_{\mathbf{F}}$ is not finitely axiomatizable.
\label{con:generalcase}
\end{conjecture}

These conjectures extend the classical result of 
Maksimova, Skvortsov, and Shehtman 
\cite[Corollary~5]{MSS79}, which shows that $\mathsf{Med}$ is not finitely axiomatizable.

\section{Non-finite axiomatization of $\mathsf{TLP}_{\mathbf{F}}$}
\label{sec:nofaofgen}
This section contains the main technical contribution of the paper, 
namely the proof that the generalized Medvedev logic 
$\mathsf{TLP}_{\mathbf{F}}$ is not finitely axiomatizable. 
In the general case, the proof relies on a structural analysis 
of $p$-morphisms from generated subframes of the topless products 
$\mathbf{F}^{m-}$. Lemma~\ref{lem:yi} provides an upper bound on branching degrees, 
which allows us to construct, for each frame $\mathbf{F}$, 
appropriate Chinese lantern frames. Combining this construction 
with classical results of Maksimova, Skvortsov, and Shehtman 
on Chinese lantern frames, we obtain the non-finite 
axiomatizability of $\mathsf{TLP}_{\mathbf{F}}$.

\subsection{The distributive lattice  case}
\label{subsec:chain}

We begin with the case of (non-trivial) finite distributive lattices, the following theorem shows that, the distributive lattice case is essentially the Medvedev logic. This follows from a recent characterization of Medvedev logic due to Grilletti \cite{Gri22}.

\begin{theorem}
Let $\mathbf{F}$ be a non-trivial finite distributive lattice. Then $\mathsf{TLP}_{\mathbf{F}} = \mathsf{Med}$.
\label{thm:chaincasemed}
\end{theorem}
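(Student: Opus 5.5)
We prove the two inclusions separately. The inclusion $\mathsf{TLP}_{\mathbf{F}}\subseteq\mathsf{Med}$ is Lemma~\ref{thm:medmaxpift}, so the substance is $\mathsf{Med}\subseteq\mathsf{TLP}_{\mathbf{F}}$. This amounts to showing that every topless product $\mathbf{F}^{n-}$ is a $\mathsf{Med}$-frame. Here we invoke Grilletti's characterization \cite{Gri22}. As I recall it, a finite poset validates $\mathsf{Med}$ (equivalently, is a p-morphic image of a generated subframe of some Medvedev frame, or is a finite $\mathsf{Med}$-frame) whenever it is the topless frame of a finite distributive lattice, or more generally whenever it is obtained from a finite lattice-like structure with suitable splitting properties by removing the top. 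So the plan is to check that $\mathbf{F}^n$ satisfies Grilletti's hypothesis, and then that removing its top yields a $\mathsf{Med}$-frame.

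\textbf{Step 1: $\mathbf{F}^n$ is a finite distributive lattice.} A finite product of finite distributive lattices is a finite distributive lattice with coordinatewise meets and joins. Since $\mathbf{F}$ is non-trivial, $\mathbf{F}^n$ is non-trivial, with top $(t,\dots,t)$.

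\textbf{Step 2: every topless finite distributive lattice is a $\mathsf{Med}$-frame.} By Birkhoff's representation, a finite distributive lattice $\mathbf{L}$ is isomorphic to the lattice of downsets of the poset $J$ of its join-irreducibles. Under the reverse-inclusion convention, $\mathbf{L}$ embeds as a sublattice of $\mathbf{P}(k)$ with $k=|J|$, preserving top and bottom. I would then aim for a p-morphism from $\mathbf{P}_0(m)$, for suitable $m$, onto $\mathbf{L}^{-}$.

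The natural candidate sends a nonempty $X\subseteq J$ (ordered by $\supseteq$) to a downset generated by $X$, read in the order convention matching $\mathbf{L}$, so that only $\varnothing$ hits the top. Monotonicity is immediate. The back condition says: if $f(X)\le u$ with $u\neq\top$, there is $Y\subseteq X$ with $f(Y)=u$. One attempts to pick $Y$ as the generators of $u$ lying inside $X$. This may fail, because the generators of $u$ need not all lie in $X$. The fix is to replace $\mathbf{P}_0(k)$ by $\mathbf{P}_0(m)$ with many copies of each join-irreducible, which is the standard blow-up trick.

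This step is exactly where I would cite Grilletti's theorem rather than reprove it: \cite{Gri22} characterizes $\mathsf{Med}$ via frames of this kind, and in particular covers topless distributive lattices. Once Step 2 is in hand, $\mathbf{F}^{n-}\vDash\mathsf{Med}$ for all $n$, so $\mathsf{Med}\subseteq\mathsf{TLP}_{\mathbf{F}}$.

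\textbf{Main obstacle.} The hard part is matching Grilletti's precise hypothesis, which may be stated for frames of the form $\mathbf{L}^{-}$ with $\mathbf{L}$ a finite Boolean or distributive lattice. In particular I need the fact that the top of $\mathbf{L}$ is the only point whose removal matters. If his theorem is stated only for Boolean algebras, I would instead reduce $\mathbf{F}^{n-}$ to that case by writing it as a p-morphic image of a topless Boolean algebra. This works only if the only preimage of the top is the top, as in the proof of Lemma~\ref{thm:medmaxpift}, and I would verify that the Birkhoff map has this property.
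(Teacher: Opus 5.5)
Your proof is correct and is essentially the paper's: $\mathsf{TLP}_{\mathbf{F}}\subseteq\mathsf{Med}$ by Lemma~\ref{thm:medmaxpift}, and $\mathsf{Med}\subseteq\mathsf{Log}(\mathbf{F}^{n-})$ for every $n$ because $\mathbf{F}^n$ is a finite distributive lattice. Grilletti's theorem is stated precisely for finite distributive lattices with the top removed, so your hypothesis-matching worry (and the Boolean fallback) never arises. You should drop the speculative direct construction in Step~2, since it would not work as sketched: for the map sending $X$ to the downset it generates, no amount of copying lets a subset $Y\subseteq X$ generate a downset whose maximal join-irreducibles have no copies in $X$.
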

\begin{proof}
By Lemma~\ref{thm:medmaxpift}, 
since $\mathsf{Med}$ is the largest generalized Medvedev logic, we already have $\mathsf{TLP}_{\mathbf{F}} \subseteq \mathsf{Med}$. Conversely, since $\mathbf{F}$ is a finite distributive lattice, each finite product $\mathbf{F}^n$ is again a finite distributive lattice. Hence $\mathbf{F}^{n-}$ is a frame obtained from a finite distributive lattice by removing its top element. By Grilletti's characterization of Medvedev logic as the logic of all such frames \cite{Gri22}, we have $\mathsf{Med}\subseteq \mathsf{Log}(\mathbf{F}^{n-})$ for every $n$. Taking the intersection over all $n$, we obtain $\mathsf{Med}\subseteq \bigcap_{n\in \omega} \mathsf{Log}(\mathbf{F}^{n-})=\mathsf{TLP}_{\mathbf{F}}.$
Therefore $\mathsf{TLP}_{\mathbf{F}}=\mathsf{Med}$.
\end{proof}
\begin{corollary}
For every finite chain $C(n)$ with $n\geq 2$, $\mathsf{TLP}_{\mathbf{C}(n)}=\mathsf{Med}$. Consequently, $\mathsf{TLP}_{\mathbf{C}(n)}$ is not finitely axiomatizable.
\label{cor:chaincasenick}	
\end{corollary}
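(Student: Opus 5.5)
The corollary should follow directly from Theorem~\ref{thm:chaincasemed} together with the classical result of Maksimova, Skvortsov, and Shehtman. So the plan is to check that every finite chain $\mathbf{C}(n)$ with $n \ge 2$ meets the hypotheses of Theorem~\ref{thm:chaincasemed}, and then transfer non-finite axiomatizability from $\mathsf{Med}$.

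First I verify that $\mathbf{C}(n)$ is a non-trivial finite distributive lattice. Any finite chain is a lattice, with meet and join given by $\min$ and $\max$. Distributivity holds because $\min(a,\max(b,c)) = \max(\min(a,b),\min(a,c))$ for totally ordered elements, which a case check on the relative order of $b$ and $c$ confirms. Non-triviality holds because $n \ge 2$ gives at least two elements. The chain is also rooted (the element $1$) and has a top (the element $n$), so it is a legitimate choice of $\mathbf{F}$ in the definition of $\mathsf{TLP}_{\mathbf{F}}$.

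Next, Theorem~\ref{thm:chaincasemed} applied to $\mathbf{F} = \mathbf{C}(n)$ gives $\mathsf{TLP}_{\mathbf{C}(n)} = \mathsf{Med}$.

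Finally, since $\mathsf{TLP}_{\mathbf{C}(n)}$ and $\mathsf{Med}$ are the same logic, any finite axiomatization of one is a finite axiomatization of the other. By \cite[Corollary~5]{MSS79}, $\mathsf{Med}$ is not finitely axiomatizable, so neither is $\mathsf{TLP}_{\mathbf{C}(n)}$. This settles Conjecture~\ref{con:chaincase}.

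No step here presents a real obstacle. The only point to state carefully is distributivity of finite chains, and it is immediate.
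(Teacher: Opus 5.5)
Your proposal is correct and is the paper's own argument. Since $\mathbf{C}(n)$ with $n\ge 2$ is a non-trivial finite distributive lattice, Theorem~\ref{thm:chaincasemed} gives $\mathsf{TLP}_{\mathbf{C}(n)}=\mathsf{Med}$, and non-finite axiomatizability then carries over directly from the Maksimova--Skvortsov--Shehtman result \cite[Corollary~5]{MSS79}.
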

\begin{proof}
Every finite chain is a finite distributive lattice, so the first claim follows from the Theorem~\ref{thm:chaincasemed}. The non-finite axiomatizability follows from the classical result of Maksimova, Skvortsov and Shehtman that $\mathsf{Med}$ is not finitely axiomatizable.
\end{proof}

\subsection{The general case}

Given integers $s$ and $n$, the \emph{Chinese lantern} 
$\mathbf{L}_{s,n}$ is the frame shown on the left in 
Figure~\ref{fig:chinese}. 
Let $\mathbf{T}_n$ be the poset consisting of a root and 
exactly $n$ immediate successors, and let $\mathbf{D}$ be 
the poset consisting of two incomparable points. 
Then $\mathbf{L}_{s,n} := \mathbf{T}_n \oplus (\mathbf{D} \times (s+1))$.
By removing the point $(m,1)$ from $\mathbf{L}_{s,n}$, 
we obtain the \emph{Chinese lantern} $\mathbf{L}'_{s,n,m}$ shown on the right in Figure~\ref{fig:chinese}.

\begin{figure}[htbp]
\centering
\begin{tikzpicture}[
scale=0.35,
dot/.style={circle, fill=black, minimum size=3.5pt, inner sep=0pt},
every label/.style={font=\scriptsize}
]

\begin{scope}[xshift=-7.5cm]
    \node[dot,label=355:{$(s+2,0)$}] (l1) at (0,1.5) {};
    \node[dot,label=left:{$(s+1,1)$}] (l2) at (-3.5,3.2) {};
    \node[dot] (l3) at (-2.25,3.2) {};
    \node[dot] (l4) at (2.25,3.2) {};
    \node[dot,label=right:{$(s+1,n)$}] (l5) at (3.5,3.2) {};

    \node[dot,label=left:{$(s,0)$}] (l6) at (-1.7,5) {};
    \node[dot,label=right:{$(s,1)$}] (l7) at (1.7,5) {};
    \node[dot,label=left:{$(s-1,0)$}] (l8) at (-1.7,6.5) {};
    \node[dot,label=right:{$(s-1,1)$}] (l9) at (1.7,6.5) {};

    \node[dot,label=left:{$(m+1,0)$}] (l10) at (-1.7,9) {};
    \node[dot,label=right:{$(m+1,1)$}] (l11) at (1.7,9) {};
    \node[dot,label=left:{$(m,0)$}] (l12) at (-1.7,10.5) {};
    \node[dot,label=right:{$(m,1)$}] (l13) at (1.7,10.5) {};
    \node[dot,label=left:{$(m-1,0)$}] (l14) at (-1.7,12) {};
    \node[dot,label=right:{$(m-1,1)$}] (l15) at (1.7,12) {};

    \node[dot,label=left:{$(1,0)$}] (l16) at (-1.7,14.5) {};
    \node[dot,label=right:{$(1,1)$}] (l17) at (1.7,14.5) {};
    \node[dot,label=left:{$(0,0)$}] (l18) at (-1.7,16) {};
    \node[dot,label=right:{$(0,1)$}] (l19) at (1.7,16) {};

    \draw[thin, dashed, dash pattern=on 1.6pt off 1pt] (l8) -- ++(1.02,0.6);
    \draw[thin, dashed, dash pattern=on 1.6pt off 1pt] (l9) -- ++(-1.02,0.6);
    \draw[thin, dashed, dash pattern=on 1.6pt off 1pt] (l10) -- ++(1.02,-0.6);
    \draw[thin, dashed, dash pattern=on 1.6pt off 1pt] (l11) -- ++(-1.02,-0.6);

    \draw[thin, dashed, dash pattern=on 1.6pt off 1pt] (l8) -- ++(0,0.6);
    \draw[thin, dashed, dash pattern=on 1.6pt off 1pt] (l9) -- ++(0,0.6);
    \draw[thin, dashed, dash pattern=on 1.6pt off 1pt] (l10) -- ++(0,-0.6);
    \draw[thin, dashed, dash pattern=on 1.6pt off 1pt] (l11) -- ++(0,-0.6);

    \draw[thin, dashed, dash pattern=on 1.6pt off 1pt] (l14) -- ++(1.02,0.6);
    \draw[thin, dashed, dash pattern=on 1.6pt off 1pt] (l15) -- ++(-1.02,0.6);
    \draw[thin, dashed, dash pattern=on 1.6pt off 1pt] (l16) -- ++(1.02,-0.6);
    \draw[thin, dashed, dash pattern=on 1.6pt off 1pt] (l17) -- ++(-1.02,-0.6);

    \draw[thin, dashed, dash pattern=on 1.6pt off 1pt] (l14) -- ++(0,0.6);
    \draw[thin, dashed, dash pattern=on 1.6pt off 1pt] (l15) -- ++(0,0.6);
    \draw[thin, dashed, dash pattern=on 1.6pt off 1pt] (l16) -- ++(0,-0.6);
    \draw[thin, dashed, dash pattern=on 1.6pt off 1pt] (l17) -- ++(0,-0.6);

    \fill (-1,3.2) circle (1.2pt);
    \fill (-0.75,3.2) circle (1.2pt);
    \fill (-0.5,3.2) circle (1.2pt);
    \fill (0.5,3.2) circle (1.2pt);
    \fill (0.75,3.2) circle (1.2pt);
    \fill (1,3.2) circle (1.2pt);

    \draw (l1) -- (l2);
    \draw (l1) -- (l3);
    \draw (l1) -- (l4);
    \draw (l1) -- (l5);

    \draw (l2) -- (l6);
    \draw (l2) -- (l7);
    \draw (l3) -- (l6);
    \draw (l3) -- (l7);
    \draw (l4) -- (l6);
    \draw (l4) -- (l7);
    \draw (l5) -- (l6);
    \draw (l5) -- (l7);

    \draw (l6) -- (l8);
    \draw (l6) -- (l9);
    \draw (l7) -- (l8);
    \draw (l7) -- (l9);

    \draw (l10) -- (l12);
    \draw (l10) -- (l13);
    \draw (l11) -- (l12);
    \draw (l11) -- (l13);
    \draw (l12) -- (l14);
    \draw (l12) -- (l15);
    \draw (l13) -- (l14);
    \draw (l13) -- (l15);

    \draw (l16) -- (l18);
    \draw (l16) -- (l19);
    \draw (l17) -- (l18);
    \draw (l17) -- (l19);
\end{scope}

\begin{scope}[xshift=8cm]
    \node[dot,label=355:{$(s+2,0)$}] (r1) at (0,1.5) {};
    \node[dot,label=left:{$(s+1,1)$}] (r2) at (-3.5,3.2) {};
    \node[dot] (r3) at (-2.25,3.2) {};
    \node[dot] (r4) at (2.25,3.2) {};
    \node[dot,label=right:{$(s+1,n)$}] (r5) at (3.5,3.2) {};

    \node[dot,label=left:{$(s,0)$}] (r6) at (-1.7,5) {};
    \node[dot,label=right:{$(s,1)$}] (r7) at (1.7,5) {};
    \node[dot,label=left:{$(s-1,0)$}] (r8) at (-1.7,6.5) {};
    \node[dot,label=right:{$(s-1,1)$}] (r9) at (1.7,6.5) {};

    \node[dot,label=left:{$(m+1,0)$}] (r10) at (-1.7,9) {};
    \node[dot,label=right:{$(m+1,1)$}] (r11) at (1.7,9) {};
    \node[dot,label={[label distance=0.6cm]180:{$(m,0)$}}] (r12) at (0,10.5) {};
    \node[dot,label=left:{$(m-1,0)$}] (r14) at (-1.7,12) {};
    \node[dot,label=right:{$(m-1,1)$}] (r15) at (1.7,12) {};

    \node[dot,label=left:{$(1,0)$}] (r16) at (-1.7,14.5) {};
    \node[dot,label=right:{$(1,1)$}] (r17) at (1.7,14.5) {};
    \node[dot,label=left:{$(0,0)$}] (r18) at (-1.7,16) {};
    \node[dot,label=right:{$(0,1)$}] (r19) at (1.7,16) {};

    \draw[thin, dashed, dash pattern=on 1.6pt off 1pt] (r8) -- ++(1.02,0.6);
    \draw[thin, dashed, dash pattern=on 1.6pt off 1pt] (r9) -- ++(-1.02,0.6);
    \draw[thin, dashed, dash pattern=on 1.6pt off 1pt] (r10) -- ++(1.02,-0.6);
    \draw[thin, dashed, dash pattern=on 1.6pt off 1pt] (r11) -- ++(-1.02,-0.6);

    \draw[thin, dashed, dash pattern=on 1.6pt off 1pt] (r8) -- ++(0,0.6);
    \draw[thin, dashed, dash pattern=on 1.6pt off 1pt] (r9) -- ++(0,0.6);
    \draw[thin, dashed, dash pattern=on 1.6pt off 1pt] (r10) -- ++(0,-0.6);
    \draw[thin, dashed, dash pattern=on 1.6pt off 1pt] (r11) -- ++(0,-0.6);

    \draw[thin, dashed, dash pattern=on 1.6pt off 1pt] (r14) -- ++(1.02,0.6);
    \draw[thin, dashed, dash pattern=on 1.6pt off 1pt] (r15) -- ++(-1.02,0.6);
    \draw[thin, dashed, dash pattern=on 1.6pt off 1pt] (r16) -- ++(1.02,-0.6);
    \draw[thin, dashed, dash pattern=on 1.6pt off 1pt] (r17) -- ++(-1.02,-0.6);

    \draw[thin, dashed, dash pattern=on 1.6pt off 1pt] (r14) -- ++(0,0.6);
    \draw[thin, dashed, dash pattern=on 1.6pt off 1pt] (r15) -- ++(0,0.6);
    \draw[thin, dashed, dash pattern=on 1.6pt off 1pt] (r16) -- ++(0,-0.6);
    \draw[thin, dashed, dash pattern=on 1.6pt off 1pt] (r17) -- ++(0,-0.6);

    \fill (-1,3.2) circle (1.2pt);
    \fill (-0.75,3.2) circle (1.2pt);
    \fill (-0.5,3.2) circle (1.2pt);
    \fill (0.5,3.2) circle (1.2pt);
    \fill (0.75,3.2) circle (1.2pt);
    \fill (1,3.2) circle (1.2pt);

    \draw (r1) -- (r2);
    \draw (r1) -- (r3);
    \draw (r1) -- (r4);
    \draw (r1) -- (r5);

    \draw (r2) -- (r6);
    \draw (r2) -- (r7);
    \draw (r3) -- (r6);
    \draw (r3) -- (r7);
    \draw (r4) -- (r6);
    \draw (r4) -- (r7);
    \draw (r5) -- (r6);
    \draw (r5) -- (r7);

    \draw (r6) -- (r8);
    \draw (r6) -- (r9);
    \draw (r7) -- (r8);
    \draw (r7) -- (r9);

    \draw (r10) -- (r12);
    \draw (r11) -- (r12);
    \draw (r12) -- (r14);
    \draw (r12) -- (r15);

    \draw (r16) -- (r18);
    \draw (r16) -- (r19);
    \draw (r17) -- (r18);
    \draw (r17) -- (r19);
\end{scope}

\end{tikzpicture}
\caption{The Chinese lanterns $\mathbf{L}_{s,n}$ (left) and $\mathbf{L}'_{s,n,m}$ (right).}
\label{fig:chinese}
\end{figure}

We define the $n$-th \emph{suspension} of a poset $\mathbf{T}$ inductively by $\mathbf{T}^{(0)} := \mathbf{T}$ and
$\mathbf{T}^{(n)} := \mathbf{T} \oplus (\mathbf{D}\times n)$. The following lemma appears as  \cite[Corollary 4]{MSS79}; a proof can also be found in \cite[Claim 9]{Fon07}.

\begin{lemma}[Maksimova, Skvortsov, \& Shehtman] 
Let $\mathbf{F}$ be a finite rooted frame with a top.
Then for every $n \ge 0$, the $n$-th suspension $\mathbf{F}^{(n)}$
is a $p$-morphic image of some Medvedev frame.
\label{lem:MSSsuspension} 
\end{lemma}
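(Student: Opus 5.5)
The plan is induction on $n$. At each step I build an explicit $p$-morphism from a Medvedev frame $\mathbf{P}_0(N)$, for $N$ large, onto $\mathbf{F}^{(n)} = \mathbf{F}\oplus(\mathbf{D}\times n)$. In $\mathbf{P}_0(N)$ the maximal points are the singletons, and going down means enlarging the set. So the $n$ layers of two-element antichains in $\mathbf{F}^{(n)}$ should be the images of the \emph{small} sets, and the copy of $\mathbf{F}$ at the bottom should be the image of the \emph{large} sets.

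\emph{Base case $n=0$.} Here I need that a finite rooted frame $\mathbf{F}$ with a top is a $p$-morphic image of some $\mathbf{P}_0(N)$. Lemma~\ref{lem:topmedframe} only gives $\mathsf{Med}\subseteq\mathsf{Log}(\mathbf{F})$, which is weaker. I would therefore reprove the underlying statement from \cite{MSS79} directly. First map $\mathbf{P}(N)$ onto $\mathbf{F}$, sending $\varnothing$ to the top. For instance, take $N$ to be the number of non-top points and send $X$ to a point determined by $X$ viewed as a set of ``obstructions''. Then check that nothing but $\varnothing$ goes to the top, so that restricting to $\mathbf{P}_0(N)$ still gives a $p$-morphism, exactly as in the proof of Lemma~\ref{thm:medmaxpift}.

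\emph{Inductive step.} Assume $g:\mathbf{P}_0(M)\to\mathbf{F}^{(n)}$ is an onto $p$-morphism. I want one on a larger Medvedev frame with one more two-point layer on top. Take $N=M+K$ and define $h:\mathbf{P}_0(N)\to\mathbf{F}^{(n+1)}$ as follows.
\begin{itemize}
\item A set $X$ that is ``small'' goes to the new top layer $\{a,b\}$. Concretely, take $X$ to be a singleton, or more generally a set contained in one of two designated blocks $A,B$ that partition $[N]$, with $X\mapsto a$ if $X\subseteq A$ and $X\mapsto b$ if $X\subseteq B$.
\item Every other set $X$ goes to $g(\pi(X))$, where $\pi(X)$ is a suitable projection of $X$ to a non-empty subset of $[M]$. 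For example, identify $[N]$ with $[M]\times[2]$ with $A=[M]\times\{1\}$ and $B=[M]\times\{2\}$, and let $\pi(X)$ be the set of indices $i$ such that $X$ meets the fibre over $i$.
\end{itemize}
Then I verify the two conditions.
\begin{itemize}
\item Monotonicity: supersets of mixed sets stay mixed, and $\pi$ is monotone.
\item The back condition: a mixed set $X$ lies below subsets contained in $A$ and subsets contained in $B$, which gives both new maximal points. Its images under $g\circ\pi$ of its mixed subsets should realize the whole upset $g(\pi(X)){\uparrow}$, using the $p$-morphism property of $g$. Here I must ensure that for every $Y\subseteq\pi(X)$, $Y\neq\varnothing$, some mixed subset of $X$ projects onto $Y$.
\end{itemize}

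\emph{Main obstacle.} The hard part is exactly this last compatibility. A mixed set whose projection $\pi(X)$ is a singleton $\{i\}$ lies over a maximal point $g(\{i\})$ of the old frame. In $\mathbf{F}^{(n+1)}$ that point must sit directly below both $a$ and $b$, and it must not collapse onto them. Moreover, the old maximal layer has two points $c,d$, each of which must be below both $a$ and $b$. I expect to fix this by choosing the fibres so that every mixed set contains an element of $A$ and an element of $B$ over the same index or adjacent indices, and by checking that $g(\{i\})\in\{c,d\}$ for every $i$, which holds because $g$ is a $p$-morphism and singletons are exactly the maximal points.

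A naive uniform colouring of all $k$-subsets by a single parity-type rule cannot work. By Ramsey's theorem some $(k+1)$-set would be monochromatic and would then fail to see both points of the layer above. This is why I would use the block-and-fibre product structure, rather than set size, to decide layers. If the direct construction becomes too intricate, I would fall back on the argument in \cite[Claim 9]{Fon07}.
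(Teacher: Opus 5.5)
There is a genuine gap in each of your two steps. The inductive step fails, and your base-case plan is backwards.

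\textbf{Inductive step.} Your construction does not give a $p$-morphism, and the repair you sketch cannot work. Under $X\mapsto(X\cap A,X\cap B)$, the mixed sets ordered by $\supseteq$ form $\mathbf{P}_0(A)\times\mathbf{P}_0(B)$. Your $\pi$ is then the union map $(U,V)\mapsto U\cup V$, and this map violates the back condition.

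Concretely, let $n\ge 1$ and let $\{c,d\}$ be the top layer of $\mathbf{F}^{(n)}$. Since $g$ is onto and singletons are maximal, there are $i\neq j$ with $g(\{i\})=c$ and $g(\{j\})=d$. The set $X=\{(i,1),(j,2)\}$ is mixed, and $h(X)=g(\{i,j\})$ lies below both $c$ and $d$, hence strictly below $c$. But the nonempty subsets of $X$ go to $a$, $b$ and $h(X)$, so no point above $X$ is sent to $c$.

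The sets you worry about, namely $\{(i,1),(i,2)\}$, are harmless. The failure occurs at mixed sets with no two points in a common fibre, and no choice of fibres removes those.

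Your block decomposition itself is right. It is in fact forced: any $p$-morphism onto $\mathbf{F}^{(n+1)}$ sends exactly the nonempty subsets of the $a$-coloured points to $a$. So the whole step amounts to mapping $\mathbf{P}_0(A)\times\mathbf{P}_0(B)$ onto $\mathbf{F}^{(n)}$. The missing idea is to use a coordinate projection, which is always a $p$-morphism.

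With $K=1$, the construction is as follows:
\begin{itemize}
\item put $A=[M]$ and $B=\{M+1\}$;
\item set $h(X)=a$ for $X\subseteq[M]$ and $h(\{M+1\})=b$;
\item set $h(X)=g(X\setminus\{M+1\})$ for all other $X$.
\end{itemize}
Monotonicity is checked as in your first bullet. For the back condition, take a mixed $X$ and $u\ge h(X)$ in $\mathbf{F}^{(n)}$. Choose $\varnothing\neq Z\subseteq X\setminus\{M+1\}$ with $g(Z)=u$, and use $Z\cup\{M+1\}$. The values $a$ and $b$ are realized by $X\setminus\{M+1\}$ and $\{M+1\}$ respectively. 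Since $\mathbf{F}^{(n+1)}=\mathbf{F}^{(n)}\oplus\mathbf{D}$, this is the entire inductive step, with no fibre combinatorics needed.

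\textbf{Base case.} Singletons are the maximal points of $\mathbf{P}_0(N)$, so any $p$-morphism onto $\mathbf{F}$ must send every singleton to the top $t$. If only $\varnothing$ goes to $t$, restricting as in Lemma~\ref{thm:medmaxpift} gives a $p$-morphism onto $\mathbf{F}^{-}$, not onto $\mathbf{F}$; in that lemma the target loses its top as well.

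Nor is Lemma~\ref{lem:topmedframe} weaker than what you need. Since $\mathbf{F}\nvDash\chi(\mathbf{F})$, we have $\chi(\mathbf{F})\notin\mathsf{Med}$, so some $\mathbf{P}_0(N)$ refutes $\chi(\mathbf{F})$. By the Jankov-de Jongh Theorem, $\mathbf{F}$ is a $p$-morphic image of a generated subframe of $\mathbf{P}_0(N)$. Restricting to $X{\uparrow}\cong\mathbf{P}_0(|X|)$, for a preimage $X$ of the root, gives the required onto $p$-morphism.

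The paper itself does not prove this lemma; it only cites \cite[Corollary~4]{MSS79} and \cite[Claim~9]{Fon07}. With these two repairs, your induction becomes a complete short proof.
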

 
Lemma~\ref{lem:MSSsuspension} can be used to establish a connection 
between Chinese lantern frames and $\mathsf{TLP}_{\mathbf{F}}$-frames.

\begin{lemma}
	Each $\mathbf{L}'_{s, n, m}$ is a $\mathsf{TLP}_\mathbf{F}$-frame.
\label{lem:phiprimeispiftframe}
\end{lemma}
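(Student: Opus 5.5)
The plan is to realize $\mathbf{L}'_{s,n,m}$ as a $p$-morphic image of a single topless product $\mathbf{F}^{N-}$. We cannot simply use the fact that $\mathbf{L}'_{s,n,m}$ is a $\mathsf{Med}$-frame: by Lemma~\ref{thm:medmaxpift}, $\mathsf{Med}$ may be strictly larger than $\mathsf{TLP}_\mathbf{F}$. Instead we route through a Medvedev frame and then lift along the $p$-morphisms $f_N^0$ constructed in the proof of Lemma~\ref{thm:medmaxpift}.

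\emph{Step 1: $\mathbf{L}'_{s,n,m}$ is a suspension.} Removing $(m,1)$ leaves $(m,0)$ as a ``waist'' of the lantern.
\begin{itemize}
\item Every point at a level $k>m$, and every point of $\mathbf{T}_n$, lies below $(m,0)$.
\item Every point at a level $k<m$ lies above $(m,0)$.
\end{itemize}
Let $\mathbf{G}$ be the subframe consisting of $(m,0)$ together with everything below it. Then $\mathbf{G}=\mathbf{T}_n\oplus(\mathbf{D}\times(s-m))\oplus\mathbf{1}$. It is a finite rooted frame, with root $(s+2,0)$ and top $(m,0)$. The part strictly above $(m,0)$ consists of the levels $m-1,\dots,0$, i.e.\ $m$ stacked copies of $\mathbf{D}$, so
$$\mathbf{L}'_{s,n,m}\cong \mathbf{G}\oplus(\mathbf{D}\times m)=\mathbf{G}^{(m)}.$$
For $m=0$ this reads simply $\mathbf{G}^{(0)}=\mathbf{G}$. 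I would check the edge indices ($m=0$ and $m=s$) against the figure, since the labels there are the only delicate point. I expect this identification to be the main potential obstacle, though it is only bookkeeping.

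\emph{Step 2: pass to a Medvedev frame.} By Lemma~\ref{lem:MSSsuspension}, applied to $\mathbf{G}$, there are some $N$ and a surjective $p$-morphism $g:\mathbf{P}_0(N)\to\mathbf{G}^{(m)}\cong\mathbf{L}'_{s,n,m}$.

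\emph{Step 3: lift to $\mathbf{F}^{N-}$.} The proof of Lemma~\ref{thm:medmaxpift} gives a surjective $p$-morphism $f_N^0:\mathbf{F}^{N-}\to\mathbf{P}_0(N)$. A composition of $p$-morphisms is a $p$-morphism, and a composition of surjections is surjective. Hence $g\circ f_N^0$ maps $\mathbf{F}^{N-}$ onto $\mathbf{L}'_{s,n,m}$ as a $p$-morphism.

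By the folklore preservation theorem, $\mathsf{TLP}_\mathbf{F}\subseteq\mathsf{Log}(\mathbf{F}^{N-})\subseteq\mathsf{Log}(\mathbf{L}'_{s,n,m})$. Therefore $\mathbf{L}'_{s,n,m}$ is a $\mathsf{TLP}_\mathbf{F}$-frame.
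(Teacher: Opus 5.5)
Your proof is correct and follows the paper's route: identify $\mathbf{L}'_{s,n,m}$ as the $m$-th suspension of $(m,0){\downarrow}$, apply Lemma~\ref{lem:MSSsuspension}, and pass through $\mathsf{Med}$. Composing with $f_N^0$ is just the frame-level unpacking of the paper's appeal to $\mathsf{TLP}_\mathbf{F}\subseteq\mathsf{Med}$ (Lemma~\ref{thm:medmaxpift}).

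Your opening remark, however, has the inclusion backwards. Any $\mathsf{Med}$-frame satisfies $\mathsf{TLP}_\mathbf{F}\subseteq\mathsf{Med}\subseteq\mathsf{Log}(\mathbf{L}'_{s,n,m})$, because a larger logic has fewer frames, not more. So the direct argument you set aside is exactly the paper's and is perfectly valid, and your detour is harmless but unnecessary.
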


\begin{proof}
The downset $(m, 0){\downarrow}$ in $\mathbf{L}'_{s, n, m}$ is a finite rooted frame with a top $(m, 0)$. Moreover, $\mathbf{L}'_{s, n, m}$ is obtained as the $m$th suspension of $(m, 0){\downarrow}$. In other words, $\mathbf{L}'_{s, n, m}=\bigl((m, 0){\downarrow}\bigr)^{(m)}$. By Lemma \ref{lem:MSSsuspension}, $\mathbf{L}'_{s, n, m}$ is a $p$-morphic image of some Medvedev frame. Hence $\mathbf{L}'_{s,n,m}$ is a $\mathsf{Med}$-frame. Since, by Lemma~\ref{thm:medmaxpift}, 
every generalized Medvedev logic 
$\mathsf{TLP}_{\mathbf{F}}$ is contained in $\mathsf{Med}$, 
it follows that $\mathbf{L}'_{s,n,m}$ 
is a $\mathsf{TLP}_{\mathbf{F}}$-frame.
\end{proof}

The following propositions correspond to 
\cite[Lemma~9]{MSS79}; for a proof, see \cite[Proposition~11]{Fon07}.

\begin{proposition}[Maksimova, Skvortsov, \& Shehtman]
For any formula $\varphi$ with $s$ variables, there is an $m \leq s$ such that $\mathbf{L}_{s, n}\vDash \varphi$ if and only if $\mathbf{L}'_{s, n, m} \vDash \varphi$.
\label{prop:maksimova}	
\end{proposition}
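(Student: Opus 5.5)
The plan is to prove the two directions by comparing valuations and exploiting the fact that a formula in $s$ variables cannot distinguish all $s+1$ levels of the lantern's "neck". Write $x_j = (j,0)$ and $y_j = (j,1)$ for $0 \le j \le s$. These form the $\mathbf{D}\times(s+1)$ part of $\mathbf{L}_{s,n}$. Note that $\mathbf{L}'_{s,n,m}$ is a $p$-morphic image of $\mathbf{L}_{s,n}$: the map sends $y_m$ to $x_m$ and fixes everything else. This is monotone, and the back condition holds because $x_m$ and $y_m$ have the same strict upsets. Hence validity on $\mathbf{L}_{s,n}$ transfers to every $\mathbf{L}'_{s,n,m}$, and the direction "$\mathbf{L}_{s,n}\vDash\varphi$ implies $\mathbf{L}'_{s,n,m}\vDash\varphi$" holds for every $m$. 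The work is therefore to find a single $m\le s$ for which the converse holds.

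For the converse, fix the $s$ variables $p_1,\dots,p_s$ of $\varphi$. I would argue contrapositively that, for a suitable $m$ (depending on $\varphi$ but not on the valuation), every refuting model on $\mathbf{L}_{s,n}$ transfers to one on $\mathbf{L}'_{s,n,m}$.

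The key observation is that in any model $V$ on $\mathbf{L}_{s,n}$, the sets $V(p_i)$ are upsets. Along the neck, each variable $p_i$ therefore has a "threshold": the least level (counting from the top) at which it fails to hold at both $x_j$ and $y_j$. Above that level it is true at both points, and at most one level can have it true at exactly one of $x_j,y_j$ (an asymmetric level). Consequently at most $s$ of the $s+1$ neck levels $0,\dots,s$ are asymmetric for some variable. So there is some level $m\le s$ at which every variable has the same truth value at $x_m$ and $y_m$. By induction on subformulas using the identical strict upsets, all formulas then agree at $x_m$ and $y_m$. The valuation thus factors through the collapse map, giving a model on $\mathbf{L}'_{s,n,m}$ refuting $\varphi$ at the image of the refuting point.

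The main obstacle is that this argument gives an $m$ depending on the valuation, whereas the statement asks for one $m$ per formula. I would handle it as follows:
\begin{enumerate}
\item Observe that for every $m\le s$ there is a $p$-morphism from $\mathbf{L}'_{s,n,m}$ onto $\mathbf{L}'_{s,n,m'}$ that is realizable by shifting levels, or else invoke the finiteness of the choices of $m$.
\item Choose $m$ as follows. If $\varphi$ is valid on $\mathbf{L}_{s,n}$, any $m$ works by the first direction. Otherwise pick a refuting valuation and take the symmetric level $m$ it supplies. Then both sides fail, which gives the biconditional.
\end{enumerate}
This case split avoids needing uniformity in the valuation. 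The remaining routine checks are that $x_m,y_m$ genuinely have the same proper upset, namely $\{x_j,y_j: j<m\}$, and that the induction on formulas goes through at points below level $m$. I expect the counting of asymmetric levels, together with making the truth-agreement induction precise, to be the delicate step.
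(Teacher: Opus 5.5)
Your argument is correct and is essentially the standard proof of the result the paper cites without proving, \cite[Lemma~9]{MSS79} and \cite[Proposition~11]{Fon07}: collapsing $(m,1)$ onto $(m,0)$ is a $p$-morphism, which gives one direction for every $m$, and the pigeonhole count over the $s+1$ two-point levels, together with your case split on whether $\mathbf{L}_{s,n}\vDash\varphi$, gives the other, so the "obstacle" about uniformity in the valuation never arises. Drop item~1, though, since it is false and you never use it: a surjective $p$-morphism between finite frames of the same size is an isomorphism, and $\mathbf{L}'_{s,n,m}$ and $\mathbf{L}'_{s,n,m'}$ are non-isomorphic for $m\neq m'$.
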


Therefore, Chinese lantern frames provide an effective tool for detecting gaps between logics, in particular for determining whether they differ with respect to finite axiomatizability. 
We return to this idea in \S\ref{subsec:cheq}, where it is developed further in the context of $\mathsf{Cheq}$.

To prove Conjecture~\ref{con:generalcase}, 
we construct, for each finite rooted frame 
$\mathbf{F}$ with a top, 
a Chinese lantern frame that is not a 
$\mathsf{TLP}_{\mathbf{F}}$-frame. The lemma below provides the crucial construction. Beyond its role in the proof of the conjecture, it will later be used in our structural analysis of generalized Medvedev logics.

\begin{lemma}
Let $\mathbf{A}$ be a finite rooted frame such that no point of $\mathbf{A}$ has exactly one immediate successor. 
Let $\mathbf{F}$ be a non-singleton finite rooted frame with a top, 
and let $b := \max \{ b(u) : u \in \mathbf{F} \}$. If $\mathbf{A}$ is a $p$-morphic image of a generated subframe 
of some $\mathbf{F}^{m-}$, then $b(x) < b \cdot 2^{d(x)}$ for every $x \in \mathbf{A}$.\label{lem:yi}
\end{lemma}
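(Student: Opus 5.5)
The plan is to fix a generated subframe $\mathbf{G}$ of $\mathbf{F}^{m-}$ and an onto $p$-morphism $f:\mathbf{G}\to\mathbf{A}$, and to work with coordinates in $\mathbf{F}^m$. Let $t$ be the top of $\mathbf{F}$. For $y=(y_1,\dots,y_m)\in\mathbf{F}^{m-}$ define its \emph{support} $\sigma(y):=\{i : y_i\neq t\}$. This set is nonempty, and it can only shrink as we move up. Since the order is coordinatewise, an immediate successor of $y$ in $\mathbf{F}^{m}$ is obtained by replacing a single coordinate $y_i$ by an immediate successor of $y_i$ in $\mathbf{F}$. Hence
\[
b_{\mathbf{G}}(y)\le\sum_{i\in\sigma(y)} b(y_i)\le b\cdot|\sigma(y)| .
\]
The immediate successors of $y$ in $\mathbf{G}$ are among those in $\mathbf{F}^m$, because $\mathbf{G}$ is an upset. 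The proof then has two parts: reduce $b(x)$ to $b_{\mathbf{G}}(y)$ for a suitable preimage $y$, and bound $|\sigma(y)|$ in terms of $d(x)$.

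\emph{Step 1 (reduction).} Let $y$ be a preimage of $x$ that is maximal among preimages of $x$. Then every $z>y$ satisfies $f(z)>x$. Let $x'$ be an immediate successor of $x$. By the back condition, $x'$ has a preimage above $y$; choose such a preimage $z$ minimal among preimages of $x'$ above $y$. If some $w$ lay strictly between $y$ and $z$, then $x<f(w)\le x'$, so $f(w)=x'$, contradicting minimality of $z$. Hence $z$ is an immediate successor of $y$. Distinct $x'$ receive distinct $z$, so $b(x)\le b_{\mathbf{G}}(y)\le b\,|\sigma(y)|$.

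\emph{Step 2 (support bound, the main step).} I will prove by induction on $d(x)$ the following claim: for every preimage $y_0$ of $x$ there is a preimage $y\ge y_0$ of $x$ with $|\sigma(y)|\le 2^{d(x)}-1$.
\begin{itemize}
\item If $d(x)=1$, then $x$ is maximal. Take any maximal point $y$ of $\mathbf{G}$ above $y_0$. Monotonicity gives $f(y)=x$, and maximal points of $\mathbf{F}^{m-}$ have exactly one non-top coordinate, so $|\sigma(y)|=1$.
\item If $d(x)>1$, then by hypothesis on $\mathbf{A}$ the point $x$ has at least two immediate successors $x_1\neq x_2$, each with depth $<d(x)$. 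By the back condition pick preimages $z_1,z_2\ge y_0$ of $x_1,x_2$. By induction, upgrade them to preimages $z'_j\ge z_j$ with $|\sigma(z'_j)|\le 2^{d(x)-1}-1$. Let $w:=z'_1\wedge z'_2$ be the coordinatewise meet in $\mathbf{F}^m$; the needed meets exist because each $\mathbf{F}$-coordinate below them is bounded by $y_0$'s coordinate. The key observation is that I only need the coordinatewise greatest lower bound with $w\ge y_0$. Since $\mathbf{F}$ need not be a lattice, I will instead take $w_i:=t$ if both $z'_{1,i},z'_{2,i}$ equal $t$, and $w_i:=y_{0,i}$ otherwise. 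Then $y_0\le w\le z'_1,z'_2$, so $w\in\mathbf{G}$ and $\sigma(w)=\sigma(z'_1)\cup\sigma(z'_2)$. Consequently $x\le f(w)\le x_1,x_2$. Since $x_1,x_2$ are distinct immediate successors of $x$, this forces $f(w)=x$. Finally $|\sigma(w)|\le 2(2^{d(x)-1}-1)<2^{d(x)}-1$.
\end{itemize}

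\emph{Conclusion.} Combine the two steps: start from any preimage, apply the claim, and then move up to a preimage $y'$ that is maximal among preimages of $x$. Moving up does not enlarge the support, so Step 1 applies and gives
\[
b(x)\le b\,(2^{d(x)}-1)<b\cdot 2^{d(x)} ,
\]
using $b\ge1$, which holds because $\mathbf{F}$ is non-singleton and rooted. The inductive step, in particular building a common lower bound $w$ above $y_0$ whose support is the union of the two supports, is the part I expect to need the most care. The coordinatewise choice $w_i\in\{t,y_{0,i}\}$ is how I plan to avoid assuming that $\mathbf{F}$ has meets.
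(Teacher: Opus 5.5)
Your proof is correct and follows essentially the same route as the paper's. You use the same support count $\sigma$ (the paper's $\#$) and the same induction, which builds a common lower bound with coordinates in $\{t, y_{0,i}\}$ whose support is the union of the two supports. You also use the same estimate $b_{\mathbf{G}}(y)\le b\,|\sigma(y)|$ and the same lifting of immediate successors from a maximal preimage; your induction hypothesis phrased ``for every preimage $y_0$'' just makes explicit the paper's restriction of $f$ to $u{\uparrow}$.

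The only blemish is your aside that coordinatewise meets exist because the coordinates are bounded below by $y_0$. That is false when $\mathbf{F}$ is not a lattice, but you immediately replace it with the explicit construction, which is all the argument needs.
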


\begin{proof}
Let $f : \mathbf{S} \to \mathbf{A}$ be a $p$-morphism, 
where $\mathbf{S}$ is a generated subframe of $\mathbf{F}^{m-}$. 
For $u=(u_1,\dots,u_m)\in\mathbf{S}$, 
let $t$ be the top of $\mathbf{F}$ and define $\#(u) := \lvert\{ i : u_i \neq t\}\rvert$. We prove by induction on $d(x)$ that for every $x\in\mathbf{A}$ 
there exists $u_x\in\mathbf{S}$ such that
$f(u_x)=x$ and $\#(u_x) < 2^{d(x)}$.

If $d(x)=1$, then $x$ is maximal. 
Choose $u$ with $f(u)=x$ and extend it to a maximal element 
$u_x\ge u$ in $\mathbf{S}$. 
Then $f(u_x)=x$ and $\#(u_x)=1<2$. Assume the claim holds for depth $d$. 
Let $d(x)=d+1$. 
Choose $u$ with $f(u)=x$. The frame $u{\uparrow}$ in $\mathbf{S}$ is also a generated subframe of $\mathbf{F}^{m-}$. The restriction of $f$ is a $p$-morphism from $u{\uparrow}$ to the finite rooted frame $x{\uparrow}$. By induction, for any proper successor $y$ of $x$, there is a $u_y\in u{\uparrow}$ such that $f(u_y)=y$ and $\#(u_y)< 2^{d(y)}$.

Since no point of $\mathbf{A}$ has exactly one immediate successor, 
$x$ has at least two distinct immediate successors $y,z$. Thus there exist $u_y,u_z\in u{\uparrow}$ with $f(u_y)=y$, $f(u_z)=z$ and $\#(u_y)<2^{d(y)}$, $\#(u_z)<2^{d(z)}$. Write $u=(u_1,\ldots,u_m)$, $u_y=(y_1,\ldots,y_m)$, and $u_z=(z_1,\ldots,z_m)$. For each $i$, let $x_i=t$ if $y_i=z_i=t$, and let $x_i=u_i$ otherwise. Put \(u_x=(x_1,\dots,x_m)\). Hence $u_x\in u{\uparrow}$ and $u_x< u_y$, $u_x < u_z$. Because $f$ is a $p$-morphism, $f(u_x)\in  x{\uparrow}$ and $f(u_x)\leq f(u_y)=y$, $f(u_x)\leq f(u_z)=z$. But $y$ and $z$ are distinct immediate successors of $x$, thus $f(u_x)=x$.

Observe that for each $1 \le i \le m$, $
x_i \neq t$ if and only if $
y_i \neq t \text{ or } z_i \neq t$. Hence, $\{ i : x_i \neq t \}
=\{ i : y_i \neq t \}\cup\{ i : z_i \neq t \}$. Therefore, $\#(u_x)
=
\lvert \{ i : x_i \neq t \} \rvert
\le
\lvert \{ i : y_i \neq t \} \rvert
+
\lvert \{ i : z_i \neq t \} \rvert
=
\#(u_y) + \#(u_z)$. By the induction hypothesis, $\#(u_y) < 2^{d(y)}$ and $\#(u_z) < 2^{d(z)}$. Since $d(y)\leq d$ and $d(z)\leq d$, we obtain $\#(u_x)<2^{d(y)} + 2^{d(z)}\leq 2^{d+1}=2^{d(x)}$. This completes the induction.

Since $b = \max\{ b(u) : u \in \mathbf{F}\}$, we have $b(x_i) \le b$ whenever $x_i \neq t$, and $b(x_i)=0$ if $x_i=t$. The immediate successors of \(u_x=(x_1,\dots,x_m)\) in \(\mathbf F^{m-}\)
are obtained by replacing exactly one coordinate \(x_i\) by an immediate
successor of \(x_i\) in \(\mathbf F\), except possibly when the resulting tuple is the deleted top element $(t,\ldots,t)$. Therefore, $b(u_x)\le \sum_{i=1}^m b(x_i)
\le
b \cdot \#(u_x)
<
b \cdot 2^{d(x)}$.

Fix $x \in \mathbf{A}$ and choose $u_x$ maximal among 
preimages of $x$ satisfying $\#(u_x) < 2^{d(x)}$. 
Such a choice is possible since $\mathbf{S}$ is finite. Let $y$ be an immediate successor of $x$. 
Because $f(u_x)=x \le y$ and $f$ is a $p$-morphism, 
there exists $u_y \in \mathbf{S}$ with $u_x \le u_y$ and $f(u_y)=y$. Since $u_x < u_y$, there exists an immediate successor 
$u'$ of $u_x$ with $u_x \le u' \le u_y$ and $x=f(u_x)\le f(u')\le f(u_y)=y$. If $f(u')=x$, then $u'$ would satisfy 
$f(u')=x$ and $\#(u')\le \#(u_x)<2^{d(x)}$, 
contradicting the maximality of $u_x$. 
Hence $f(u')=y$. Thus every immediate successor $y$ of $x$ 
lifts to an immediate successor $u'$ of $u_x$. Consequently, $b(x)\le b(u_x)$. Together with the bound $b(u_x)<b\cdot 2^{d(x)}$, this yields $b(x)<b\cdot 2^{d(x)}$ as required.
\end{proof}

\begin{corollary}
$\mathbf{L}_{s, b\cdot 2^{s+3}}$ is not a $\mathsf{TLP}_\mathbf{F}$-frame, where $b=\max\{b(u): u \in \mathbf{F}\}$.
\label{cor:phinotpiftframe}	
\end{corollary}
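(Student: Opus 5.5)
We argue by contradiction via Lemma~\ref{lem:yi}, applied with $\mathbf{A} := \mathbf{L}_{s,n}$ and $n := b\cdot 2^{s+3}$. Suppose $\mathbf{L}_{s,n}$ were a $\mathsf{TLP}_\mathbf{F}$-frame. Since $\mathbf{L}_{s,n}$ is finite and rooted, it would then validate $\chi(\mathbf{L}_{s,n})$ itself. As $\mathbf{L}_{s,n}\nvDash\chi(\mathbf{L}_{s,n})$, the formula $\chi(\mathbf{L}_{s,n})$ is not in $\mathsf{TLP}_\mathbf{F}=\bigcap_m\mathsf{Log}(\mathbf{F}^{m-})$. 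So some $\mathbf{F}^{m-}$ refutes it, and by the Jankov--de Jongh Theorem $\mathbf{L}_{s,n}$ is a $p$-morphic image of a generated subframe of $\mathbf{F}^{m-}$.

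Next we check the hypothesis of Lemma~\ref{lem:yi}: no point of $\mathbf{L}_{s,n}$ has exactly one immediate successor. The points are as follows.
\begin{itemize}
\item The root $(s+2,0)$ has the $n\ge 2$ immediate successors $(s+1,1),\dots,(s+1,n)$.
\item Each $(s+1,j)$ has the two immediate successors $(s,0)$ and $(s,1)$.
\item Each $(k,i)$ with $1\le k\le s$ has the two immediate successors $(k-1,0)$ and $(k-1,1)$.
\item The maximal points $(0,0)$ and $(0,1)$ have none.
\end{itemize}
The hypothesis therefore holds, and the lemma gives $b(x)<b\cdot 2^{d(x)}$ for every $x$.

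We apply this at the root $x=(s+2,0)$. Its depth is the length of the longest chain $(s+2,0)<(s+1,1)<(s,0)<\dots<(0,0)$, which has $s+3$ points, so $d(x)=s+3$. Hence $n=b(x)<b\cdot 2^{s+3}=n$, a contradiction. Therefore $\mathbf{L}_{s,b\cdot 2^{s+3}}$ is not a $\mathsf{TLP}_\mathbf{F}$-frame.

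The only delicate point is the depth count. If the paper's depth convention instead gave $d(x)=s+2$, the bound would only be sharper, so the contradiction persists under either convention.
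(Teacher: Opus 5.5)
Your proof is correct and is essentially the paper's argument in contrapositive form: the Jankov--de Jongh theorem combined with Lemma~\ref{lem:yi} at the root, which has depth $s+3$ and branching degree $b\cdot 2^{s+3}$. You also usefully verify the no-single-successor hypothesis of Lemma~\ref{lem:yi}, which the paper leaves implicit ($n\ge 2$ holds because $b\ge 1$ for non-singleton $\mathbf{F}$), but note one wording slip: a frame never validates its own Jankov--de Jongh formula, so ``it would then validate $\chi(\mathbf{L}_{s,n})$ itself'' should read ``if $\chi(\mathbf{L}_{s,n})$ were in $\mathsf{TLP}_\mathbf{F}$, then $\mathbf{L}_{s,n}$ would validate it,'' which is exactly what your next sentence uses.
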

\begin{proof}
Let $x$ be the root of $\mathbf{L}_{s, b\cdot  2^{s+3}}$. Then $d(x)=s+3$ and $b(x)=b\cdot  2^{s+3}$. By Lemma \ref{lem:yi}, $\mathbf{L}_{s, b\cdot  2^{s+3}}$ is not a $p$-morphic image of any generated subframe of $\mathbf{F}^{m-}$.
  
Let $\chi$ be the Jankov-de Jongh formula of $\mathbf{L}_{s, b\cdot  2^{s+3}}$. For every $m$, $\mathbf{F}^{m-} \nvDash \chi$ if and only if $\mathbf{L}_{s,\, b\cdot 2^{s+3}}$ is a $p$-morphic image of a generated subframe of $\mathbf{F}^{m-}$. Since the latter is impossible, we conclude that $\mathbf{F}^{m-}\vDash \chi$ for all $m$. Thus $\chi \in \bigcap_{m\in\omega}\mathsf{Log}(\mathbf{F}^{m-})=\mathsf{TLP}_\mathbf{F}$. On the other hand, $\mathbf{L}_{s, b\cdot 2^{s+3}}\nvDash\chi$. Therefore, $\mathbf{L}_{s, b\cdot 2^{s+3}}$ is not a $\mathsf{TLP}_\mathbf{F}$-frame.                                      
\end{proof} 

\begin{theorem}
$\mathsf{TLP}_\mathbf{F}$ is not finitely axiomatizable.
\label{thm:generalcasenick}
\end{theorem}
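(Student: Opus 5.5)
We argue by contradiction, following the Maksimova--Skvortsov--Shehtman scheme. Suppose $\mathsf{TLP}_\mathbf{F}$ is finitely axiomatizable, say $\mathsf{TLP}_\mathbf{F} = \mathsf{IPC} + \varphi$ for a single formula $\varphi$ (a finite set of axioms can be replaced by its conjunction). Let $s$ be the number of propositional variables occurring in $\varphi$, put $b := \max\{b(u) : u \in \mathbf{F}\}$ and $n := b\cdot 2^{s+3}$, and consider the Chinese lantern $\mathbf{L}_{s,n}$.

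By Corollary~\ref{cor:phinotpiftframe}, $\mathbf{L}_{s,n}$ is not a $\mathsf{TLP}_\mathbf{F}$-frame. Since validity of a single axiom on a frame is preserved under substitution and the inference rules of $\mathsf{IPC}$, a frame validates $\mathsf{IPC}+\varphi$ exactly when it validates $\varphi$. Hence $\mathbf{L}_{s,n} \nvDash \varphi$.

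By Proposition~\ref{prop:maksimova}, applied to $\varphi$ with its $s$ variables, there is some $m \le s$ such that $\mathbf{L}_{s,n} \vDash \varphi$ iff $\mathbf{L}'_{s,n,m} \vDash \varphi$. Thus $\mathbf{L}'_{s,n,m} \nvDash \varphi$. On the other hand, Lemma~\ref{lem:phiprimeispiftframe} shows that $\mathbf{L}'_{s,n,m}$ is a $\mathsf{TLP}_\mathbf{F}$-frame. Because $\varphi \in \mathsf{TLP}_\mathbf{F}$, this gives $\mathbf{L}'_{s,n,m} \vDash \varphi$, a contradiction. So $\mathsf{TLP}_\mathbf{F}$ is not finitely axiomatizable.

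The substantive work is already done in Lemma~\ref{lem:yi} and its corollary. The remaining points to check are bookkeeping:
\begin{itemize}
\item The lantern parameter $s$ must be exactly the variable count of $\varphi$, so that Proposition~\ref{prop:maksimova} applies.
\item Lemma~\ref{lem:phiprimeispiftframe} must hold for every $n$, and in particular for $n = b\cdot 2^{s+3}$. It does, since its proof uses only the suspension representation of $\mathbf{L}'_{s,n,m}$ and places no constraint on $n$.
\item The application of Lemma~\ref{lem:yi} inside Corollary~\ref{cor:phinotpiftframe} needs $\mathbf{L}_{s,n}$ to have no point with exactly one immediate successor. This holds when $n \ge 2$: each level $(k,\cdot)$ has two immediate successors, the points $(s+1,j)$ have the two successors $(s,0)$ and $(s,1)$, and the root has $n$ of them. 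Since $\mathbf{F}$ is non-singleton, $b \ge 1$, so $n = b\cdot 2^{s+3} \ge 8$.
\end{itemize}
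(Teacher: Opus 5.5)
Your proposal is correct and is essentially the paper's proof. You reduce to a single axiom $\varphi$ in $s$ variables, use Proposition~\ref{prop:maksimova} to pair $\mathbf{L}_{s,b\cdot 2^{s+3}}$ with some $\mathbf{L}'_{s,b\cdot 2^{s+3},m}$, and play Corollary~\ref{cor:phinotpiftframe} against Lemma~\ref{lem:phiprimeispiftframe}; your extra checks (a frame validating $\varphi$ validates all of $\mathsf{IPC}+\varphi$, and $b\ge 1$ ensures no lantern point has exactly one immediate successor, so Lemma~\ref{lem:yi} applies) are accurate and spell out what the paper leaves implicit.
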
                                           
\begin{proof}
Suppose, towards a contradiction, that 
$\mathsf{TLP}_{\mathbf{F}}$ is finitely axiomatizable. Since a finite set of formulas $\varphi_1,\ldots,\varphi_n$ axiomatizes the same extension as the single formula $\varphi_1\wedge\cdots\wedge\varphi_n$, we may assume that $\mathsf{TLP}_{\mathbf{F}}$ is axiomatized by a single formula 
$\varphi(p_1,\dots,p_s)$, where $s$ is the number of propositional variables occurring in it. By Proposition~\ref{prop:maksimova}, 
there exists $m \le s$ such that $\mathbf{L}_{s,\, b\cdot 2^{s+3}} \vDash \varphi$ if and only if $\mathbf{L}'_{s,\, b\cdot 2^{s+3},\, m} \vDash \varphi$.

By Corollary~\ref{cor:phinotpiftframe}, $\mathbf{L}_{s,\, b\cdot 2^{s+3}} \nvDash \varphi$, whereas by Lemma~\ref{lem:phiprimeispiftframe}, $\mathbf{L}'_{s,\, b\cdot 2^{s+3},\, m} \vDash \varphi$. This contradiction shows that 
$\mathsf{TLP}_{\mathbf{F}}$ cannot be finitely axiomatized 
with $s$ variables. Since $s$ was arbitrary, 
the generalized Medvedev logics $\mathsf{TLP}_{\mathbf{F}}$ is not finitely axiomatizable.
\end{proof}

\section{Further results on generalized Medvedev logics}
This section collects two further consequences of the main results.
First, we strengthen Fontaine's theorem on $\mathsf{Med}$ by showing that, whenever
$\mathsf{Cheq}\subsetneq \mathsf{TLP}_{\mathbf{F}}$, the logic $\mathsf{TLP}_{\mathbf{F}}$
is not finitely axiomatizable even over $\mathsf{Cheq}$. Second, we turn to the global picture of generalized Medvedev logics under inclusion.
We show that there are countably many pairwise distinct generalized Medvedev logics and no least generalized Medvedev logic exists. 
\subsection{$\mathsf{TLP}_\mathbf{F}$ is not finitely axiomatizable over $\mathsf{Cheq}$}
\label{subsec:cheq}
In 2006, Fontaine showed that $\mathsf{Med}$ is not finitely 
axiomatizable over $\mathsf{Cheq}$ \cite[Theorem~9]{Fon06}. 
We extend this result to generalized Medvedev logics. 
The key observation is that all Chinese lantern frames are 
$\mathsf{Cheq}$-frames. Let $\mathbf{W}$ denote the \emph{bowtie}, the poset consisting 
of four elements: two incomparable minimal elements and two 
incomparable maximal elements, each maximal element lying above 
each minimal one; see Figure~\ref{fig:bowtie}. Our result relies on the following Bowtie Lemma, whose proof
can be found in~\mbox{\cite[Lemma~6.2.3]{Xiao24}}.

\begin{figure}[htbp]
$$\xymatrix@C=5mm{
\bullet\ar@{-}[d]\ar@{-}[rrd] & & \bullet\ar@{-}[d]\\
\bullet\ar@{-}[rru] & & \bullet\\
}$$
\caption{The bowtie $\mathbf{W}$.\label{fig:bowtie}}
\end{figure}

\begin{lemma}[Bowtie Lemma]	
If $\mathbf{F}$ is any finite rooted frame, then 
$\mathbf{F}\oplus\mathbf{W}$ is a $p$-morphic image of some 
$\mathbf{V}_n$, that is, the $n$-fold product of the two-fork.
\label{lem:bow-tie}
\end{lemma}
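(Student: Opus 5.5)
The plan is to build the $p$-morphism on $\mathbf{V}_N$ in two layers. The bowtie $\mathbf{W}$ is produced by a parity trick on the points of $\mathbf{V}_N$ with at most one coordinate equal to $o$. The rooted frame $\mathbf{F}$ below it is produced by transporting a $p$-morphism from a Medvedev frame, via Lemma~\ref{lem:MSSsuspension}.

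\textbf{Bookkeeping.} Write points of $\mathbf{V}_N$ as tuples $u\in\{o,p,q\}^N$ and let $O(u):=\{i: u_i=o\}$. Then $u\le v$ implies $O(u)\supseteq O(v)$. Moreover, from $u$ one can reach a successor $v$ with any prescribed $O(v)\subseteq O(u)$, filling the freed coordinates with $p$ or $q$ at will. Let $\pi(u)$ be the parity of $\lvert\{i:u_i=p\}\rvert$.

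\textbf{The bowtie part.} Let $\mathbf{W}$ have maximal points $a_0,a_1$ and minimal points $b_0,b_1$.
\begin{itemize}
\item Send a maximal point $u$ (so $O(u)=\varnothing$) to $a_{\pi(u)}$.
\item Send a point $u$ with $\lvert O(u)\rvert=1$ to $b_{\pi(u)}$.
\end{itemize}
A point with exactly one $o$ has exactly two maximal successors, obtained by filling the $o$ with $p$ or with $q$. These have different parities, so each $b_j$ sees both $a$'s, as $\mathbf{W}$ requires. Any point with $\lvert O(u)\rvert\ge2$ reaches points with a single $o$ of either parity, so it sees both $b_0$ and $b_1$.

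\textbf{The $\mathbf{F}$ part.} It remains to map $\{u:\lvert O(u)\rvert\ge2\}$ onto $\mathbf{F}$ so that the combined map is a $p$-morphism. Since the back condition only needs successors with arbitrary smaller $O$-sets, I will let this map depend on $O(u)$ alone. Put $u\mapsto g(O(u))$, where $g$ is defined on the sets of size at least $2$ in $\mathbf{P}_0(N)$ (ordered by $\supseteq$). Then $u\mapsto g(O(u))$ is a $p$-morphism onto $\mathbf{F}\oplus\mathbf{W}$ exactly when
\[
\hat g(X)=\begin{cases} g(X), & \lvert X\rvert\ge2,\\ \ast, & \lvert X\rvert\le1,\end{cases}
\]
is a $p$-morphism from $\mathbf{P}(N)=\langle\mathcal P([N]),\supseteq\rangle$ onto $\mathbf{F}\oplus\mathbf{C}(1)$. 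Here $\ast$ is the new top standing in for the whole bowtie, and the $\supseteq$-upset of sets of size at most one is collapsed to $\ast$.

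To obtain such a $g$, I apply Lemma~\ref{lem:MSSsuspension} to the rooted frame with a top $\mathbf{G}:=\mathbf{F}\oplus\mathbf{C}(1)$. This gives a $p$-morphism $h$ from some $\mathbf{P}_0(N)$ onto $\mathbf{G}$, or onto a suspension of $\mathbf{G}$ that is then collapsed back onto $\mathbf{G}$. I then extend $h$ by sending $\varnothing$ to $\ast$.

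\textbf{Main obstacle.} The difficulty is that $h^{-1}(\ast)$ must be exactly the family of sets of size at most one. It may be larger, and then the parity labelling on $\mathbf{W}$ breaks: filling an $o$ with $p$ flips $\pi$, while filling it with $q$ does not.
\begin{itemize}
\item First attempt: replace $N$ by a blown-up index set. Each coordinate of $\mathbf{P}_0(N)$ becomes a block of coordinates of $\mathbf{V}_{N'}$, and the Medvedev argument is read off from which blocks still contain an $o$. The condition ``at most one $o$ overall'' then lies strictly inside the region mapped to $\ast$.
\item Fallback: if that fails, define $g$ directly by induction on the depth of $\mathbf{F}$. Write $\mathbf{F}$ as its root below its immediate-successor subframes. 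Assign disjoint blocks of coordinates to the subframes, and choose $N$ large enough that the set of $o$-positions determines which generated subframe one is in.
\end{itemize}

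The forth condition and surjectivity are routine once $g$ is in place. The real work is this choice of coordinates making $\hat g^{-1}(\ast)$ exactly the sets of size at most one.
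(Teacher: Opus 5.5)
\textbf{There is a genuine gap.} Your reduction is sound: the parity labelling does map $\{u:\lvert O(u)\rvert\le 1\}$ onto $\mathbf{W}$, and $\mathbf{V}_N\to\mathbf{P}(N)$, $u\mapsto O(u)$, is a $p$-morphism. But what you call the main obstacle is an impossibility, not a bookkeeping issue.

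With the bowtie preimage fixed as $\{u:\lvert O(u)\rvert\le 1\}$ and the rest of the map a function of $O(u)$, your own equivalence forces $g$ to be a surjective $p$-morphism from $\mathbf{Q}(N):=\langle\{X\subseteq[N]:\lvert X\rvert\ge 2\},\supseteq\rangle$ onto $\mathbf{F}$. This is because the back condition inside $F$ forces witnesses of size $\ge 2$. For many $\mathbf{F}$, no such $g$ exists for any $N$.

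\emph{The four-fork $\mathbf{T}_4$.} A $3$-set generates a $4$-element upset of $\mathbf{Q}(N)$, so it cannot be sent to the root of $\mathbf{T}_4$, whose upset has $5$ points. Hence each $3$-set and its three $2$-subsets all go to one maximal point. Any two $2$-sets are joined by a chain of $2$-sets with consecutive ones inside a common $3$-set, so all $2$-sets get the same image. Yet each of the four maximal points must be hit by some $2$-set, a contradiction.

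\emph{The frames the paper actually needs.} Running the argument of Lemma~\ref{lem:yi} in $\mathbf{Q}(N)$, with $Y\cup Z$ playing the role of $u_x$, gives $b(x)\le 2^{d(x)}$ for every image in which no point has exactly one immediate successor. This rules out precisely the frames $\mathbf{F}$ with $\mathbf{F}\oplus\mathbf{W}=\mathbf{L}_{s,\,b\cdot 2^{s+3}}$ to which the paper applies the Bowtie Lemma: their root has depth $s+1$ but branching $b\cdot 2^{s+3}$.

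\emph{Dropping $O$-dependence does not help.} Suppose the $\mathbf{F}$-part may depend on the whole tuple, but the bowtie region stays $\{\lvert O(u)\rvert\le 1\}$. Then points with two $o$'s must go to maximal points of $\mathbf{F}$, and a point with three $o$'s sees only six of them. So for $\mathbf{F}=\mathbf{T}_7$ each three-$o$ point is sent, together with its six successors, to a single maximal point. Connectivity then again collapses everything to one label.

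\emph{Your two repair attempts.} The ``first attempt'' goes the wrong way: it makes the $\ast$-region strictly larger than $\{\lvert O\rvert\le 1\}$, which is exactly what destroys the parity labelling. The fallback still produces a function of $O(u)$, so it hits the same obstruction.

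\textbf{The missing idea} is a bowtie preimage that is adapted to $\mathbf{F}$ and not given by parity, so that the part of $\mathbf{V}_N$ mapped into $\mathbf{F}$ can be thin and highly branching. Here is an example for $\mathbf{T}_N$ with $N\ge 3$. Call $p^N:=(p,\dots,p)$ and the points $s^i$ (with $p$ at $i$ and $q$ elsewhere) special. Send:
\begin{itemize}
\item the root to the root $r$ of $\mathbf{T}_N$;
\item the point with $p$ at $i$ and $o$ elsewhere to the $i$-th maximal point $c_i$;
\item every non-total point of $\{o,p\}^N$ with at least two $p$'s to $b_0$;
\item every non-total point that contains a $q$ and extends some $s^i$ to $b_1$;
\item the special points to $a_1$;
\item everything else to $a_0$.
\end{itemize}
One checks directly that this is a $p$-morphism of $\mathbf{V}_N$ onto $\mathbf{T}_N\oplus\mathbf{W}$. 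In it, points with the same $O$-set go to different places, and the bowtie preimage contains points with $N-1$ coordinates equal to $o$. Finding a construction of this kind for an arbitrary finite rooted $\mathbf{F}$ is the real content of the lemma. The paper defers that proof to \cite{Xiao24}, and your proposal does not supply it.
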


The Bowtie Lemma clarifies structural properties of 
$\mathsf{Cheq}$-frames and therefore provides a useful tool 
for the analysis of $\mathsf{Cheq}$-frame structures.

\begin{corollary}
The Chinese lantern frames $\mathbf{L}_{s,n}$ and 
$\mathbf{L}'_{s,n,m}$ are $\mathsf{Cheq}$-frames.
\label{lem:chineselanterncheq}
\end{corollary}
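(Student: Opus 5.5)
The plan is to read both Chinese lanterns as a finite rooted frame with the bowtie $\mathbf{W}$ placed on top of it, and then apply the Bowtie Lemma (Lemma~\ref{lem:bow-tie}). From the pictures, the top two levels $(1,0),(1,1)$ and $(0,0),(0,1)$ of $\mathbf{L}_{s,n}$ form a bowtie: $(0,0)$ and $(0,1)$ are incomparable maximal points, $(1,0)$ and $(1,1)$ are incomparable, and each of $(1,0),(1,1)$ lies below each of $(0,0),(0,1)$. Everything below level $1$ lies below every point of this bowtie, since the lantern is built as an iterated linear sum. Concretely, since $\mathbf{D}\times(s+1)=(\mathbf{D}\times s)\oplus\mathbf{D}$ and $\mathbf{D}\oplus\mathbf{D}=\mathbf{W}$, I would write
\[
\mathbf{L}_{s,n}=\mathbf{T}_n\oplus(\mathbf{D}\times(s+1)) = \bigl(\mathbf{T}_n\oplus(\mathbf{D}\times(s-1))\bigr)\oplus\mathbf{W}
\]
for $s\ge 1$, using associativity of $\oplus$. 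The frame $\mathbf{G}:=\mathbf{T}_n\oplus(\mathbf{D}\times(s-1))$ is finite and rooted, with root $(s+2,0)$. The Bowtie Lemma then yields a $p$-morphism from some $\mathbf{V}_k$ onto $\mathbf{L}_{s,n}$. Since $\mathsf{Cheq}\subseteq\mathsf{Log}(\mathbf{V}_k)$ and validity is preserved under $p$-morphic images, $\mathbf{L}_{s,n}$ is a $\mathsf{Cheq}$-frame.

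For $\mathbf{L}'_{s,n,m}$ I would argue in the same way, provided the deleted point $(m,1)$ is not in the top two levels, i.e.\ $m\ge 2$. In that case $\mathbf{L}'_{s,n,m}=\mathbf{G}'\oplus\mathbf{W}$, where $\mathbf{G}'$ is $\mathbf{G}$ with $(m,1)$ removed. I would check that $\mathbf{G}'$ is still finite and rooted: the root $(s+2,0)$ survives because $m\le s$, as in Proposition~\ref{prop:maksimova}. The Bowtie Lemma then applies again.

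The main obstacle is the boundary cases where the deleted point, or the required shape, meets the top bowtie: $m\in\{0,1\}$, and also small $s$ such as $s=0$, where the decomposition needs $s\ge1$. In those cases I would use a trivial $p$-morphism instead of the Bowtie Lemma directly. If $m=0$, the frame has a top $(0,0)$, so it is a $\mathsf{KC}$-frame. Its root cluster of successors maps onto the frame obtained by adding one more $\mathbf{D}$-level, so I would present it as a $p$-morphic image of $\mathbf{L}_{s+1,n}$ via collapsing. If $m=1$, I would similarly collapse $(0,0),(0,1)$ identically onto a bowtie shifted upward, or realise the frame as a $p$-morphic image of a larger lantern in $\mathbf{G}\oplus\mathbf{W}$-form. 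Alternatively, since any finite rooted frame with a top is a $\mathsf{KC}$-frame and $\mathbf{V}_k$ maps onto products of $2$-chains, the case $m=0$ reduces to $\mathsf{KC}\supseteq\mathsf{Cheq}$. I expect that, with the convention $m\ge1$ implicit in Proposition~\ref{prop:maksimova}, only these small edge cases need separate handling, and the core argument is the one-line decomposition above.
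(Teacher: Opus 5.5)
\textbf{Review of the proposal for the $\mathsf{Cheq}$-frame corollary on Chinese lanterns.}

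Your argument for $\mathbf{L}_{s,n}$ with $s\ge 1$ is exactly the paper's: the top two $\mathbf{D}$-levels form a bowtie, and the Bowtie Lemma applies.

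For $\mathbf{L}'_{s,n,m}$ you depart from the paper. The paper writes $\mathbf{L}'_{s,n,m}$ as the vertical sum of $(m,0){\downarrow}$ and $(m,0){\uparrow}$ and invokes Fontaine's Claim~20. Your two alternative cases are correct:
\begin{itemize}
\item for $m\ge 2$, you apply the Bowtie Lemma directly to $\mathbf{G}'\oplus\mathbf{W}$, which avoids Claim~20;
\item for $m=0$, you use the top $(0,0)$ together with $\mathsf{Cheq}\subseteq\mathsf{KC}$, which follows from the $p$-morphisms $\mathbf{V}_k\to\mathbf{C}(2)^k$.
\end{itemize}

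The gap is the case $m=1$. There $(1,0){\uparrow}$ is a fork, so $\mathbf{L}'_{s,n,1}$ is not of the form $\mathbf{G}\oplus\mathbf{W}$. Neither of your two suggestions amounts to an argument. ``Collapse $(0,0),(0,1)$ identically onto a bowtie shifted upward'' does not define a map. ``A $p$-morphic image of a larger lantern'' names neither the lantern nor the map. The paper handles this case with the vertical-sum claim.

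A more elementary fix, which also makes your whole case split unnecessary, uses the fact that $(m,0)$ and $(m,1)$ are twins in the linear sum $\mathbf{L}_{s,n}$: they have the same strict upset and the same strict downset. Let $f$ fix every point except $f((m,1))=(m,0)$. Then $f$ is monotone, and it satisfies the back condition. Suppose $f(x)\le u$. Either $u=f(x)$, and $y=x$ works. Or $u$ is strictly above $f(x)$, hence strictly above $x$, and $y=u$ works. Thus $\mathbf{L}'_{s,n,m}$ is a $p$-morphic image of $\mathbf{L}_{s,n}$ for every $m$, and it is a $\mathsf{Cheq}$-frame as soon as $\mathbf{L}_{s,n}$ is.

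Finally, your plan to treat $s=0$ by a ``trivial $p$-morphism'' cannot work. The frame $\mathbf{L}_{0,n}=\mathbf{T}_n\oplus\mathbf{D}$ is not of the form $\mathbf{G}\oplus\mathbf{W}$ for $n\neq 2$. For $n\ge 5$ it also appears not to be a $\mathsf{Cheq}$-frame at all. My sketch for this is as follows. Suppose $f\colon\mathbf{V}_k\to\mathbf{L}_{0,n}$ is a surjective $p$-morphism, and take a maximal element of the preimage of the root. View its successors as faces of the cube $\{p,q\}^k$. A case check on these faces suggests that at most four distinct middle points can be reached. The statement should therefore be read with $s\ge 1$, as the paper's proof tacitly does. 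This is all the application needs, since $s$ counts the variables of an axiom.
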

\begin{proof}
Every Chinese lantern $\mathbf{L}_{s,n}$ is a linear sum of a finite rooted frame and a bowtie. Hence, by the Bowtie Lemma~\ref{lem:bow-tie}, 
$\mathbf{L}_{s,n}$ is a $\mathsf{Cheq}$-frame.
For $m\ge 0$, the structure of $(m,0){\uparrow}$ in 
$\mathbf{L}'_{s,n,m}$ is determined by $m$.
If $m\ge 2$, it is a linear sum of a finite rooted frame and a bowtie.
If $m=1$, it is the frame $\mathbf{V}$.
If $m=0$, it is a singleton.
In each case, $(m,0){\uparrow}$ is a $p$-morphic image of some $V_{n_1}$
by the Bowtie Lemma~\ref{lem:bow-tie}. Because $(m,0){\downarrow}$ is a finite rooted frame with a top, it is also a $p$-morphic image of some Medvedev frame and thus a $p$-morphic image of some $V_{n_2}$.
By Claim~20 in~\cite{Fon07}, 
the vertical sum of the $p$-morphic images of $V_{n_1}$ and $V_{n_2}$
is a $\mathsf{Cheq}$-frame.
As $\mathbf{L}'_{s,n,m}$ is precisely the vertical sum of 
$(m,0){\downarrow}$ and $(m,0){\uparrow}$,
it follows that $\mathbf{L}'_{s,n,m}$ is a $\mathsf{Cheq}$-frame.
\end{proof}
\begin{theorem}
If $\mathsf{Cheq}\subsetneq \mathsf{TLP}_\mathbf{F}$, then $\mathsf{TLP}_\mathbf{F}$ is not finitely axiomatizable over $\mathsf{Cheq}$.  
\label{pro:genfontaine}		
\end{theorem}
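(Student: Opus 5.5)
The plan is to mirror the proof of Theorem~\ref{thm:generalcasenick}, relativized to $\mathsf{Cheq}$. Suppose, towards a contradiction, that $\mathsf{TLP}_{\mathbf{F}}$ is finitely axiomatizable over $\mathsf{Cheq}$. Conjoining finitely many axioms, we may write $\mathsf{TLP}_{\mathbf{F}} = \mathsf{Cheq} + \varphi$ for a single formula $\varphi(p_1,\dots,p_s)$ in $s$ variables. The hypothesis $\mathsf{Cheq}\subsetneq\mathsf{TLP}_{\mathbf{F}}$ is used only to make the question meaningful. The argument itself needs only $\mathsf{Cheq}\subseteq\mathsf{TLP}_{\mathbf{F}}$, which guarantees that $\mathsf{TLP}_{\mathbf{F}}$ is indeed an extension of $\mathsf{Cheq}$.

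The key observation is that a frame validating $\mathsf{Cheq}$ is a $\mathsf{TLP}_{\mathbf{F}}$-frame if and only if it validates $\varphi$. By Corollary~\ref{lem:chineselanterncheq}, both $\mathbf{L}_{s,n}$ and $\mathbf{L}'_{s,n,m}$ are $\mathsf{Cheq}$-frames for all parameters. Fix $n := b\cdot 2^{s+3}$, where $b=\max\{b(u):u\in\mathbf{F}\}$.

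\begin{itemize}
\item By Corollary~\ref{cor:phinotpiftframe}, $\mathbf{L}_{s,n}$ is not a $\mathsf{TLP}_{\mathbf{F}}$-frame. Since it validates $\mathsf{Cheq}$, it must refute $\varphi$; otherwise it would validate all of $\mathsf{Cheq}+\varphi=\mathsf{TLP}_{\mathbf{F}}$.
\item By Proposition~\ref{prop:maksimova}, there is some $m\le s$ such that $\mathbf{L}_{s,n}\vDash\varphi$ if and only if $\mathbf{L}'_{s,n,m}\vDash\varphi$. Hence $\mathbf{L}'_{s,n,m}\nvDash\varphi$.
\item By Lemma~\ref{lem:phiprimeispiftframe}, $\mathbf{L}'_{s,n,m}$ is a $\mathsf{TLP}_{\mathbf{F}}$-frame. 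Since $\varphi\in\mathsf{TLP}_{\mathbf{F}}$, it validates $\varphi$.
\end{itemize}

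This is a contradiction. Since $s$ was arbitrary, no finite axiomatization over $\mathsf{Cheq}$ exists.

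The only delicate step is the passage from ``$\mathbf{L}_{s,n}$ is not a $\mathsf{TLP}_{\mathbf{F}}$-frame'' to ``$\mathbf{L}_{s,n}\nvDash\varphi$''. It relies on $\mathbf{L}_{s,n}$ validating $\mathsf{Cheq}$, which is exactly why Corollary~\ref{lem:chineselanterncheq} was established. It also uses the standard fact that the set of formulas valid on a frame is closed under modus ponens and substitution, so validity of $\mathsf{Cheq}$ and $\varphi$ yields validity of $\mathsf{Cheq}+\varphi$.

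No further obstacle arises. In particular, the lantern $\mathbf{L}'_{s,n,m}$ does not even need its $\mathsf{Cheq}$-frame property here, because it already validates all of $\mathsf{TLP}_{\mathbf{F}}$.
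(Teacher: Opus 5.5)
Your proof is correct and follows the paper's argument. You assume $\mathsf{TLP}_{\mathbf{F}}=\mathsf{Cheq}+\varphi$, apply Proposition~\ref{prop:maksimova} with $n=b\cdot 2^{s+3}$, and use Corollaries~\ref{cor:phinotpiftframe} and~\ref{lem:chineselanterncheq} together with Lemma~\ref{lem:phiprimeispiftframe} to make the two lanterns disagree on $\varphi$, which is exactly the paper's route. You are more explicit than the paper, which only says ``as in the proof of Theorem~\ref{thm:generalcasenick}'', and you correctly note that the $\mathsf{Cheq}$-frame property is needed only for $\mathbf{L}_{s,n}$, to pass from ``not a $\mathsf{TLP}_{\mathbf{F}}$-frame'' to $\mathbf{L}_{s,n}\nvDash\varphi$.
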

\begin{proof}
Assume towards a contradiction that $\mathsf{TLP}_\mathbf{F}$ is finitely axiomatizable over $\mathsf{Cheq}$. We may then assume that there is a single formula $\varphi$ with $s$ propositional variables such that $\mathsf{TLP}_\mathbf{F}=\mathsf{Cheq}+\varphi$. By Proposition \ref{prop:maksimova}, we find $m \leq s$ such that
$\mathbf{L}_{s,b\cdot 2^{s+3}}\vDash \varphi$ if and only if $\mathbf{L}'_{s,b\cdot 2^{s+3},m}\vDash \varphi$.
By Corollary~\ref{lem:chineselanterncheq},
both $\mathbf{L}_{s,b\cdot 2^{s+3}}$ and 
$\mathbf{L}'_{s,b\cdot 2^{s+3},m}$ are $\mathsf{Cheq}$-frames. However, as in the proof of Theorem~\ref{thm:generalcasenick}, these two Chinese lanterns disagree on the validity of $\varphi$. Contradiction!
\end{proof}

\subsection{The order structure of generalized Medvedev logics}

Since no generalized Medvedev logic is finitely axiomatizable, 
it follows that many well-known logics cannot be generalized Medvedev logics. 
Together with the facts that $\mathsf{Med}$ is the greatest generalized Medvedev logic 
and that $\mathsf{Med} \subseteq \mathsf{KC}$, 
this yields an approximate description of the 
``geographic coordinates'' of generalized Medvedev logics in the lattice of intermediate logics.
\begin{figure}[htbp]
\centering
\begin{tikzpicture}[
scale=0.31,
every node/.style={circle, fill=black, minimum size=3.5pt, inner sep=0pt},
edge/.style={line width=0.5pt, shorten >=3pt, shorten <=3pt}
]

\begin{scope}[xshift=0cm] 
    \node (a1) at (0,1) {};
    \node (a2) at (0,4.5) {};
    \node (a3) at (0,8) {};

    \draw[edge] (a1) -- (a2);
    \draw[edge] (a2) -- (a3);
\end{scope}

\begin{scope}[xshift=6cm] 
    \node (b1) at (0,1) {};
    \node (b2) at (-2,4.5) {};
    \node (b3) at (-0.65,4.5) {};
    \node (b4) at (0.65,4.5) {};
    \node (b5) at (2,4.5) {};
    \node (b6) at (0,8) {};
 
\node[draw=none, fill=none] at (5.75,4.5) {$\cdots$}; 

    \draw[edge] (b1) -- (b2);
    \draw[edge] (b1) -- (b3);
    \draw[edge] (b1) -- (b4);
    \draw[edge] (b1) -- (b5);
    \draw[edge] (b2) -- (b6);
    \draw[edge] (b3) -- (b6);
    \draw[edge] (b4) -- (b6);
    \draw[edge] (b5) -- (b6);
\end{scope}

\begin{scope}[xshift=18cm] 
    \node (c1)[label={[font=\normalsize]below:$r_i$}] at (0,1) {};
    \node (c2)[label={[font=\normalsize]left:$1$}] at (-2.5,4.5) {};
    \node (c3)[label={[font=\normalsize]right:$2$}] at (-1.5,4.5) {};
    \node (c4)[label={[font=\normalsize]right:$4^{i-1}$}] at (2.5,4.5) {};
    \node (c5)[label={[font=\normalsize]above:$t_i$}] at (0,8) {};
    
\node[draw=none, fill=none] at (0.8,4.5) {$\cdots$};
    \draw[edge] (c1) -- (c2);
    \draw[edge] (c1) -- (c3);
    \draw[edge] (c1) -- (c4);
    \draw[edge] (c2) -- (c5);
    \draw[edge] (c3) -- (c5);
    \draw[edge] (c4) -- (c5);
\end{scope}

\end{tikzpicture}

\caption{The members of the family $\{\mathbf{N}(i) : i\in\omega\}$.}
\label{fig:Di}
\end{figure}

\begin{theorem}
There are at least countably many pairwise distinct generalized Medvedev logics.
\label{thm:countablygen}
\end{theorem}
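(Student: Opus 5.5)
Our plan is to use the family $\{\mathbf{N}(i): i\in\omega\}$ of Figure~\ref{fig:Di}. Here $\mathbf{N}(i)$ has a root $r_i$, exactly $4^{i-1}$ pairwise incomparable middle points, and a top $t_i$. Each $\mathbf{N}(i)$ is a non-singleton finite rooted frame with a top, so $\mathsf{TLP}_{\mathbf{N}(i)}$ is a generalized Medvedev logic. The maximal branching degree of $\mathbf{N}(i)$ is $b_i = 4^{i-1}$, attained at $r_i$. We aim to show that the logics $\mathsf{TLP}_{\mathbf{N}(i)}$ are pairwise distinct. It suffices to find, for $i<j$, a finite rooted frame $\mathbf{A}$ that is a $\mathsf{TLP}_{\mathbf{N}(i)}$-frame but not a $\mathsf{TLP}_{\mathbf{N}(j)}$-frame, or vice versa. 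We use Jankov--de Jongh formulas exactly as in the proof of Corollary~\ref{cor:phinotpiftframe}.

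\textbf{Step 1: a frame for the larger logic.} Take $\mathbf{A} := \mathbf{N}(j)^{-}$, which is the fan $\mathbf{T}_{4^{j-1}}$. Since $\mathbf{N}(j)^{1-} = \mathbf{N}(j)^{-}$, $\mathbf{A}$ is a $\mathsf{TLP}_{\mathbf{N}(j)}$-frame. So $\chi(\mathbf{A}) \notin \mathsf{TLP}_{\mathbf{N}(j)}$.

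\textbf{Step 2: the same frame is not a frame for the smaller logic.} We show that $\mathbf{A}$ is not a $p$-morphic image of a generated subframe of any $\mathbf{N}(i)^{m-}$. Then $\chi(\mathbf{A})\in\mathsf{TLP}_{\mathbf{N}(i)}$, and the two logics differ. No point of $\mathbf{A}$ has exactly one immediate successor, since $4^{j-1}\ge 4>1$ for $j\ge 2$. Its root has depth $2$. Lemma~\ref{lem:yi} then requires $b(\mathrm{root}) < b_i\cdot 2^{2} = 4\cdot 4^{i-1} = 4^{i}$ for $\mathbf{A}$ to be such an image. But $b(\mathrm{root}) = 4^{j-1}\ge 4^{i}$ when $j>i$, so this fails. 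This is precisely why the middle layers grow by a factor of $4$: the factor $2^{d}=4$ at depth $2$ in Lemma~\ref{lem:yi} is absorbed by the jump from $i$ to $i+1$.

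\textbf{Remaining case.} The case $i=1$ is degenerate. Here $\mathbf{N}(1)$ has a single middle point and is the $3$-chain, so $\mathsf{TLP}_{\mathbf{N}(1)}=\mathsf{Med}$ by Corollary~\ref{cor:chaincasenick}. The argument above still covers $j\ge 2$ against $i=1$, since Lemma~\ref{lem:yi} applies to any $\mathbf{F}$. Combining the steps, $i<j$ implies $\chi(\mathbf{N}(j)^{-})\in \mathsf{TLP}_{\mathbf{N}(i)}\setminus\mathsf{TLP}_{\mathbf{N}(j)}$. Hence the logics are pairwise distinct, and in fact they form a strictly decreasing chain $\mathsf{TLP}_{\mathbf{N}(1)}\supsetneq\mathsf{TLP}_{\mathbf{N}(2)}\supsetneq\cdots$.

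The main obstacle is only bookkeeping. The bound in Lemma~\ref{lem:yi} is strict and depends on the depth of the root, so we must check that $\mathbf{A}$ has depth exactly $2$ and that the inequality $4^{j-1}\ge 4^{i}$ is correctly aligned for all $j>i$. The rest is a direct reuse of the Jankov--de Jongh argument.
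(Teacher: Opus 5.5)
Your proof is correct and follows the paper's own argument. It uses the same family $\mathbf{N}(i)$, applies Lemma~\ref{lem:yi} at the depth-$2$ root of $\mathbf{N}(j)^{-}$, and shows that the Jankov--de Jongh formula $\chi(\mathbf{N}(j)^{-})$ lies in $\mathsf{TLP}_{\mathbf{N}(i)}$ but not in $\mathsf{TLP}_{\mathbf{N}(j)}$ for $i<j$. The one thing you leave unjustified is the closing remark that the logics form a strictly decreasing chain: that needs the inclusion $\mathsf{TLP}_{\mathbf{N}(j)}\subseteq\mathsf{TLP}_{\mathbf{N}(i)}$, which the paper gets from the $p$-morphism $f_{ji}$ collapsing the extra middle points, applied coordinatewise with only the top sent to the top. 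This remark is not needed for pairwise distinctness, so the theorem itself is fully proved.
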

\begin{proof}
We construct a family $\{\mathbf{N}(i) : i\in \omega\}$ of finite rooted frames with its top. We begin by setting $\mathbf{N}(1)=\mathbf{C}(3)$. By Theorem~\ref{thm:chaincasemed}, we have $\mathsf{TLP}_{\mathbf{N}(1)}=\mathsf{Med}$. Next, define $\mathbf{N}(2)=\langle N_2,R_2\rangle$ by $N_2=\{r_2,1,2,3,4,t_2\}$ and $R_2=\{(r_2,x):x\in N_2\}
\cup
\{(x,t_2):x\in N_2\}
\cup
\{(x,x):x\in N_2\}$. Thus $r_2$ is the root of $\mathbf{N}(2)^{-}$, its depth is $2$, and its branching degree is $4$. Since $b(r_2)=2^{d(r_2)}$, by Lemma~\ref{lem:yi}, 
$\mathbf{N}(2)^{-}$ is not a $p$-morphic image of any generated subframe of 
$\mathbf{N}(1)^{m-}$ for any $m$. By Jankov-de Jongh Theorem, $\chi(\mathbf{N}(2)^{-})\in \mathsf{TLP}_{\mathbf{N}(1)}$ and $\chi(\mathbf{N}(2)^{-})\notin \mathsf{Log}(\mathbf{N}(2)^{-})$. Since $\mathsf{TLP}_{\mathbf{N}(1)}=\mathsf{Med}$ is the greatest generalized Medvedev logic, $\mathsf{TLP}_{\mathbf{N}(2)}$ is strictly contained in $\mathsf{TLP}_{\mathbf{N}(1)}$.

We now define, for $i\ge1$, $\mathbf{N}(i)= \langle N_i, R_i\rangle$ by $
N_i=\{r_i,1,2,\ldots,4^{\,i-1},t_i\}$ and $R_i=
\{(r_i,x):x\in N_i\}
\cup
\{(x,t_i):x\in N_i\}
\cup
\{(x,x):x\in N_i\}$. For $1\le i<j$, there is a natural $p$-morphism
$f_{ji}:\mathbf{N}(j)\to \mathbf{N}(i)$ given by $$f_{ji}(x)=
\begin{cases}
4^{\,i-1}, & \text{if } x\in\{4^{\,i-1}+1,\ldots,4^{\,j-1}\},\\
x, & \text{otherwise}.
\end{cases}$$ 

In addition, only $t_i$ of $\mathbf{N}(i)$ can be sent to  $t_j$ of $\mathbf{N}(j)$ via $f_{ji}$, thus for $1\leq i < j$, $\mathsf{TLP}_{\mathbf{N}(j)} \subseteq \mathsf{TLP}_{\mathbf{N}(i)}$. Moreover, the maximal branching degree of $\mathbf{N}(i)$
is $4^{\,i-1}$.
By Lemma~\ref{lem:yi},
$\mathbf{N}(j)^{-}$ is not a 
$\mathsf{TLP}_{\mathbf{N}(i)}$-frame when $i<j$,
since in $\mathbf{N}(j)^{-}$
the root has depth $2$ and branching degree $4^{\,j-1} \ge 2^{d(r_j)}\cdot 4^{\,i-1}$. Therefore, $\mathsf{TLP}_{\mathbf{N}(j)}
\subsetneq
\mathsf{TLP}_{\mathbf{N}(i)}$ whenever $1\le i<j$. We obtain a strictly descending chain $\mathsf{Med}
=\mathsf{TLP}_{\mathbf{N}(1)}
\supsetneq
\mathsf{TLP}_{\mathbf{N}(2)}
\supsetneq
\cdots
\supsetneq
\mathsf{TLP}_{\mathbf{N}(i)}
\supsetneq
\cdots$, which completes the proof.
\end{proof}

\begin{theorem}
	There is no least generalized Medvedev logic. 
\label{thm:noleastgen}
\end{theorem}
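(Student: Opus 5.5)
Given any non-singleton finite rooted frame $\mathbf{F}$ with a top, we will construct another such frame $\mathbf{G}$ with $\mathsf{TLP}_{\mathbf{G}}\subsetneq\mathsf{TLP}_{\mathbf{F}}$; this shows no generalized Medvedev logic is least. We generalize the family $\mathbf{N}(i)$ from the proof of Theorem~\ref{thm:countablygen}. Let $b:=\max\{b(u):u\in\mathbf{F}\}$ and take $k:=4b$ (any $k\ge 2^2 b$ will do). Let $\mathbf{G}$ consist of a root $r$, $k$ pairwise incomparable middle points $1,\dots,k$, and a top $t$. Then $\mathbf{G}$ is a non-singleton finite rooted frame with a top. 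In $\mathbf{G}^{-}$ the root has depth $2$ and branching degree $k$, and no point has exactly one immediate successor, since $b(r)=k\ge 4$ and the middle points are maximal.

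\textbf{Step 1: $\mathsf{TLP}_{\mathbf{G}}\not\supseteq\mathsf{TLP}_{\mathbf{F}}$.} By Lemma~\ref{lem:yi} applied with $\mathbf{A}=\mathbf{G}^{-}$, if $\mathbf{G}^{-}$ were a $p$-morphic image of a generated subframe of some $\mathbf{F}^{m-}$, we would have $k=b(r)<b\cdot 2^{d(r)}=4b$, which is false. The Jankov-de Jongh Theorem then gives $\chi(\mathbf{G}^{-})\in\mathsf{TLP}_{\mathbf{F}}$. On the other hand, $\mathbf{G}^{-}=\mathbf{G}^{1-}\nvDash\chi(\mathbf{G}^{-})$, so $\chi(\mathbf{G}^{-})\notin\mathsf{TLP}_{\mathbf{G}}$.

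\textbf{Step 2: $\mathsf{TLP}_{\mathbf{G}}\subseteq\mathsf{TLP}_{\mathbf{F}}$.} This is the step I expect to be the main obstacle, since it requires relating $\mathbf{G}$ to an arbitrary $\mathbf{F}$ rather than to one member of the family $\mathbf{N}(i)$. The natural attempt is a $p$-morphism $g:\mathbf{G}\to\mathbf{F}$ onto $\mathbf{F}$ with $g^{-1}(t_{\mathbf{F}})=\{t\}$. Then the coordinatewise map $g^{m}:\mathbf{G}^m\to\mathbf{F}^m$ is an onto $p$-morphism, and it restricts to $\mathbf{G}^{m-}\to\mathbf{F}^{m-}$ exactly as in the proof of Lemma~\ref{thm:medmaxpift}. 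Intersecting over $m$ would then give the inclusion.

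However, $\mathbf{G}$ has depth $3$ while $\mathbf{F}$ may be deeper, so $\mathbf{F}$ is not in general a $p$-morphic image of $\mathbf{G}$. The fix is to choose $\mathbf{G}$ built from $\mathbf{F}$ itself. Let $\mathbf{G}:=\mathbf{F}\times\mathbf{N}$, where $\mathbf{N}$ has root, $k$ middle points and top, with $k$ to be chosen. Projection onto the first coordinate is a $p$-morphism $\mathbf{G}\to\mathbf{F}$, but the preimage of $t_{\mathbf{F}}$ is $\{t_{\mathbf{F}}\}\times\mathbf{N}$, not just the top, so this projection does not restrict to topless frames. Instead, observe that $\mathbf{G}^{m}=\mathbf{F}^m\times\mathbf{N}^m$, and that $\mathbf{F}^{m-}$ is a generated-subframe-closed $p$-morphic image of $\mathbf{G}^{m-}$ via the following map: send $(u,v)$ to $u$ when $u\neq$ top, and send $(\mathrm{top},v)$ with $v\neq$ top to a suitable element below the top. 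This is delicate, so I would try the simpler choice $\mathbf{G}:=\mathbf{F}\times\mathbf{N}$ and use the identity $\mathbf{G}^{m}\cong\mathbf{F}^m\times\mathbf{N}^m$ together with the fact that $\mathbf{F}^{m-}$ is a generated subframe of $\mathbf{G}^{m'-}$ for a suitable $m'$. Concretely, $\{(u,t_{\mathbf N}^m)\}$ with $u\neq\mathrm{top}$ is upward closed in $\mathbf{G}^{m-}$ and isomorphic to $\mathbf{F}^{m-}$. This gives $\mathsf{Log}(\mathbf{G}^{m-})\subseteq\mathsf{Log}(\mathbf{F}^{m-})$ directly, and hence $\mathsf{TLP}_{\mathbf{G}}\subseteq\mathsf{TLP}_{\mathbf{F}}$.

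\textbf{Step 3: redo Step 1 for the new $\mathbf{G}$.} With $\mathbf{G}:=\mathbf{F}\times\mathbf{N}$, the generated subframe $\{(t_{\mathbf{F}},v):v\in\mathbf{N}\}^{-}\cong\mathbf{N}^{-}$ of $\mathbf{G}^{-}$ is rooted with depth $2$, branching degree $k$, and no point with exactly one immediate successor. Taking $k=4b$, Lemma~\ref{lem:yi} shows that $\mathbf{N}^{-}$ is not a $p$-morphic image of a generated subframe of any $\mathbf{F}^{m-}$, so $\chi(\mathbf{N}^{-})\in\mathsf{TLP}_{\mathbf{F}}$. But $\mathbf{N}^{-}$ is a generated subframe of $\mathbf{G}^{-}$, so $\chi(\mathbf{N}^{-})\notin\mathsf{TLP}_{\mathbf{G}}$. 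Together with Step 2 this gives $\mathsf{TLP}_{\mathbf{G}}\subsetneq\mathsf{TLP}_{\mathbf{F}}$, so no generalized Medvedev logic is least.
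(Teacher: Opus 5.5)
Your proof is correct, and your Step 1 is exactly the paper's proof. The paper takes the frame $\mathbf{M}$ with a root, $4b_0$ incomparable middle points and a top. It applies Lemma~\ref{lem:yi} at the root of $\mathbf{M}^{-}$ to get $\chi(\mathbf{M}^{-})\in\mathsf{TLP}_{\mathbf{F}_0}$, and concludes because $\chi(\mathbf{M}^{-})\notin\mathsf{Log}(\mathbf{M}^{-})\supseteq\mathsf{TLP}_{\mathbf{M}}$.

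Step 1 already finishes the theorem. A least logic $\mathsf{TLP}_{\mathbf{F}}$ would have to be contained in \emph{every} $\mathsf{TLP}_{\mathbf{G}}$, so one $\mathbf{G}$ with $\mathsf{TLP}_{\mathbf{F}}\not\subseteq\mathsf{TLP}_{\mathbf{G}}$ suffices. The inclusion $\mathsf{TLP}_{\mathbf{G}}\subseteq\mathsf{TLP}_{\mathbf{F}}$ that you flagged as the main obstacle is not needed for this statement.

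Your Steps 2--3 with $\mathbf{G}=\mathbf{F}\times\mathbf{N}$ are a genuinely different and stronger route, and they are correct:
\begin{itemize}
\item $\{(u,t_{\mathbf{N}}^m): u\neq t_{\mathbf{F}}^m\}$ is an upset of $\mathbf{G}^{m-}$ isomorphic to $\mathbf{F}^{m-}$. Anything above such a point has $\mathbf{N}^m$-coordinate $t_{\mathbf{N}}^m$, so its $\mathbf{F}^m$-coordinate cannot be the top.
\item $\{(t_{\mathbf{F}},v): v\neq t_{\mathbf{N}}\}$ is an upset of $\mathbf{G}^{-}$ isomorphic to $\mathbf{N}^{-}$.
\end{itemize}
Hence every generalized Medvedev logic has a strictly smaller one, so there is not even a \emph{minimal} generalized Medvedev logic. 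Iterating (recomputing $b$ at each stage) gives an infinite strictly descending chain below every $\mathsf{TLP}_{\mathbf{F}}$, which also subsumes Theorem~\ref{thm:countablygen}. The paper's argument is shorter and uses only non-containment; yours costs one product construction and gives more information about the order structure.

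In a write-up, delete the abandoned intermediate map sending $(t_{\mathbf{F}}^m,v)$ to ``a suitable element below the top''. For $m\ge 2$ it cannot be monotone. Every non-top $u$ satisfies $(u,r)\le(t_{\mathbf{F}}^m,v)$, where $r$ is the root of $\mathbf{N}^m$, so the image of $(t_{\mathbf{F}}^m,v)$ would have to be a greatest element of $\mathbf{F}^{m-}$, and $\mathbf{F}^{m-}$ has none.
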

\begin{proof}
Assume towards a contradiction that there exists a least generalized Medvedev logic $\mathsf{B}$. Then there is a finite rooted frame $\mathbf{F}_0$ with a top such that $\mathsf{B}=\mathsf{TLP}_{\mathbf{F}_0}$ and $\mathsf{B}\subseteq \mathsf{TLP}_{\mathbf{F}}$ for every finite rooted frame $\mathbf{F}$ with a top. Let $b_0=\max\{b(u):u\in \mathbf{F}_0\}$. We construct a finite frame $\mathbf{M}=\langle M,R\rangle$ by $M=\{r_0,1,2,\ldots,4b_0,t_0\}$ and $R=
\{(r_0,x):x\in M\}
\cup
\{(x,t_0):x\in M\}
\cup
\{(x,x):x\in M\}$. In $\mathbf{M}^{-}$, the root $r_0$ has depth $d(r_0)=2$
and branching degree $4b_0$. Since $b(r_0)=4b_0
=
 b_0 \cdot 2^{d(r_0)}$, Lemma~\ref{lem:yi} implies that $\mathbf{M}^{-}$ is not a $p$-morphic image of any generated subframe of any $\mathbf{F_0}^{m-}$. Thus $\chi(\mathbf{M}^{-})\in \mathsf{TLP}_{\mathbf{F}_0}=\mathsf{B}$. On the other hand, $\mathsf{B} \subseteq \mathsf{TLP}_\mathbf{M}=\bigcap_{m\in \omega}\mathsf{Log}(\mathbf{M}^{m-})\subseteq \mathsf{Log}(\mathbf{M}^{-})$, so $\chi(\mathbf{M}^{-})\in \mathsf{Log}(\mathbf{M}^{-})$, a contradiction.
\end{proof}

\begin{corollary}
The intersection of all generalized Medvedev logics is no longer a generalized Medvedev logic. $\mathsf{IPC}$ is not a generalized Medvedev logic.
\end{corollary}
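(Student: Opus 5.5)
The corollary has two claims. Both should follow from Theorem~\ref{thm:noleastgen} and the construction in its proof, combined with Lemma~\ref{lem:yi} and the Jankov-de Jongh Theorem.

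\emph{First claim.} Let $\mathsf{I}:=\bigcap_{\mathbf{F}}\mathsf{TLP}_{\mathbf{F}}$, where $\mathbf{F}$ ranges over all non-singleton finite rooted frames with a top. Suppose, towards a contradiction, that $\mathsf{I}=\mathsf{TLP}_{\mathbf{F}_0}$ for some such $\mathbf{F}_0$. By definition $\mathsf{I}\subseteq\mathsf{TLP}_{\mathbf{F}}$ for every $\mathbf{F}$, so $\mathsf{TLP}_{\mathbf{F}_0}$ would be a least generalized Medvedev logic. This contradicts Theorem~\ref{thm:noleastgen}. Nothing further is needed here; it is just a matter of unwinding the definition of ``least''.

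\emph{Second claim.} Suppose $\mathsf{IPC}=\mathsf{TLP}_{\mathbf{F}_0}$. Since $\mathsf{IPC}$ is contained in every intermediate logic, $\mathsf{TLP}_{\mathbf{F}_0}$ would again be a least generalized Medvedev logic, contradicting Theorem~\ref{thm:noleastgen}.

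For a self-contained check, I would repeat the argument directly. Put $b_0:=\max\{b(u):u\in\mathbf{F}_0\}$ and take the frame $\mathbf{M}$ from the proof of Theorem~\ref{thm:noleastgen}. Lemma~\ref{lem:yi} shows that $\mathbf{M}^{-}$ is not a $p$-morphic image of a generated subframe of any $\mathbf{F}_0^{m-}$. Hence $\chi(\mathbf{M}^{-})\in\mathsf{TLP}_{\mathbf{F}_0}$. However, $\chi(\mathbf{M}^{-})\notin\mathsf{IPC}$, because $\mathbf{M}^{-}\nvDash\chi(\mathbf{M}^{-})$.

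Alternatively, one can use Theorem~\ref{thm:generalcasenick}: $\mathsf{IPC}$ is finitely axiomatizable, since it has the empty set of extra axioms, while no $\mathsf{TLP}_{\mathbf{F}}$ is.

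The only point needing care is in the self-contained route. One must check that $\mathbf{M}^{-}$ satisfies the hypothesis of Lemma~\ref{lem:yi} that no point has exactly one immediate successor. This holds: the root has $4b_0\ge 4$ immediate successors, and all other points are maximal. With that, the main difficulty is already handled by Theorem~\ref{thm:noleastgen}.
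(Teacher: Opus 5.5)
Your proposal is correct and matches the paper's reasoning: the corollary follows immediately from Theorem~\ref{thm:noleastgen}, because if either the intersection or $\mathsf{IPC}$ equalled some $\mathsf{TLP}_{\mathbf{F}_0}$, that logic would be a least generalized Medvedev logic. Your two extras are also sound: the self-contained check, including the verification that $\mathbf{M}^{-}$ satisfies the branching hypothesis of Lemma~\ref{lem:yi}, and the alternative argument for the $\mathsf{IPC}$ part via Theorem~\ref{thm:generalcasenick}.
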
  
\tikzset{
    dots on path/.style={
        postaction={
            decorate,
            decoration={
                markings,
                mark=at position 0.25 with {\fill[red] circle (1.3pt);},
                mark=at position 0.5 with {\fill[red] circle (1.3pt);},
                mark=at position 0.75 with {\fill[red] circle (1.3pt);}
            }
        }
    }
}

\begin{figure}[htbp]
\centering
\begin{tikzpicture}[scale=1,>=Stealth, node distance=1cm and 2cm]
    \node[circle, fill, inner sep=1pt, label=above:$\mathsf{CPC}$] (CPC) at (0,0) {};
    \node[circle, fill, inner sep=1pt, label=left:$\mathsf{KC}$] (KC) at (0,-0.5) {};
    \node[circle, fill=red, inner sep=1pt, label={[red]left:$\mathsf{Med}$}] (Med) at (0,-1) {};
    \node[red] (dots1) at (-1,-4) {$\vdots$};
    \node[red] (dots2) at (1,-4) {$\vdots$}; 
    \node[red] (dots4) at (-0.5,-3.9) {$\vdots$};

    \node[circle, fill, inner sep=1pt, label=below:$\mathsf{IPC}$] (IPC) at (0,-5.1) {};
    \node[red] (dots5) at (-1.1,-2.45) {\fontsize{8}{10}\selectfont Infinite};    
    \draw[very thick, dashed, red] (Med) to[out=225,in=90] (dots1);
    \draw[very thick, dashed, red] (Med) to[out=-45,in=90] (dots2);
    
    \draw[thick, red, dots on path] (Med) to[out=240,in=90]  (dots4);
    \draw[thick, dashed, red, dash pattern=on 1.5pt off 0.8pt] (dots4)-- ++(0, -0.45);

    \draw[->,thick,red] (dots5) to[out=254,in=190] (-0.54,-3.25);

    \draw[thick, dashed, bend right=82.5] (CPC) to (IPC);
    \draw[thick, dashed, bend left=82.5] (CPC) to (IPC);
\end{tikzpicture}
\vspace{10pt}
\caption{A schematic ``geography'' of generalized Medvedev logics. The red region indicates generalized Medvedev logics, all of which are not finitely axiomatizable.}
\end{figure}

\section{Open problems and further directions}
Several directions for further research naturally arise from the present work. We already show that the class of generalized Medvedev logics has a non-trivial order structure: it contains a countable strictly descending chain, but has no least element. A related question is whether one can construct a countable antichain of generalized Medvedev logics under inclusion. It would also be interesting to understand when two finite rooted frames with a top give rise to the same generalized Medvedev logic, or to comparable ones. Given finite rooted frames with a top, $\mathbf{F}$ and $\mathbf{G}$, what features of
$\mathbf{F}$ and $\mathbf{G}$ determine whether the corresponding logics are equal, comparable, or incomparable? In particular, the present paper only gives some sufficient conditions for separating such logics, based on branching degree arguments. Another direction concerns the possible algebraic structure of this class. Although we have obtained a coarse ``geography'' of generalized Medvedev logics within the lattice of intermediate logics, a more refined structural description remains open. In particular, what algebraic or order-theoretic structure is formed by the class of all generalized Medvedev logics under inclusion? One may also consider a more general version of the construction, where instead of fixing a single finite rooted frame $\mathbf{F}$ with a top, one takes topless products whose frames are chosen from a class of finite rooted frames with a top. The present paper focuses on the single frame case because this is the direct analogue of the classical Medvedev construction, where one fixes the $2$-chain and considers the frames $\mathbf{C}(2)^{n-}$. The general class-based construction is therefore a natural further direction, but we leave this generalization for future work.

\bigskip
\noindent\textbf{Acknowledgements.}
The author would like to express his sincere gratitude to Nick Bezhanishvili 
for introducing him to the topic of this paper and for many 
inspiring discussions. In particular, he suggested the general research direction, 
conjectured Theorem~\ref{thm:pifkc}, and proposed the conjectures formulated in 
this paper as Conjecture~\ref{con:chaincase} and Conjecture~\ref{con:generalcase}.
The author is also especially grateful to Ga\"elle Fontaine for many helpful 
discussions. During the author's visit to ILLC in 2024, 
she generously shared her insights and discussed several aspects of the 
structure of generalized Medvedev logics. 
The author is grateful to the anonymous referees for their careful reading and
helpful comments. In particular, the author thanks the referees for pointing out
the relevance of Grilletti's characterization of Medvedev logic, and is especially
grateful to one referee for suggesting the stronger distributive lattice
version and its proof. These suggestions led to an improvement of
Subsection~\ref{subsec:chain}.
\nocite{*}
\bibliographystyle{eptcs}
\bibliography{generic}

\begin{thebibliography}{10}
\providecommand{\bibitemdeclare}[2]{}
\providecommand{\surnamestart}{}
\providecommand{\surnameend}{}
\providecommand{\urlprefix}{Available at }
\providecommand{\url}[1]{\texttt{#1}}
\providecommand{\href}[2]{\texttt{#2}}
\providecommand{\urlalt}[2]{\href{#1}{#2}}
\providecommand{\doi}[1]{doi:\urlalt{https://doi.org/#1}{#1}}
\providecommand{\eprint}[1]{arXiv:\urlalt{https://arxiv.org/abs/#1}{#1}}
\providecommand{\bibinfo}[2]{#2}

\bibitemdeclare{article}{vBBG03}
\bibitem{vBBG03}
\bibinfo{author}{Johan~\surnamestart van Benthem\surnameend},
  \bibinfo{author}{Guram~\surnamestart Bezhanishvili\surnameend} \&
  \bibinfo{author}{Mai~\surnamestart Gehrke\surnameend} (\bibinfo{year}{2003}):
  \emph{\bibinfo{title}{Euclidean Hierarchy in Modal Logic}}.
\newblock {\slshape \bibinfo{journal}{Studia Logica}}
  \bibinfo{volume}{75}(\bibinfo{number}{3}), pp. \bibinfo{pages}{327--344},
  \doi{10.1023/B:STUD.0000009564.00287.16}.

\bibitemdeclare{incollection}{BB22}
\bibitem{BB22}
\bibinfo{author}{Guram~\surnamestart Bezhanishvili\surnameend} \&
  \bibinfo{author}{Nick~\surnamestart Bezhanishvili\surnameend}
  (\bibinfo{year}{2022}): \emph{\bibinfo{title}{Jankov formulas and
  axiomatization techniques for intermediate logics}}.
\newblock In: {\slshape \bibinfo{booktitle}{V.A. Yankov on Non-Classical
  Logics, History and Philosophy of Mathematics}},
  \bibinfo{publisher}{Springer}, pp. \bibinfo{pages}{71--124},
  \doi{10.1007/978-3-031-06843-0_4}.

\bibitemdeclare{inproceedings}{Fon06}
\bibitem{Fon06}
\bibinfo{author}{Ga{\"e}lle~\surnamestart Fontaine\surnameend} (\bibinfo{year}{2006}):
  \emph{\bibinfo{title}{$\mathsf{ML}$ is not finitely axiomatizable over
  $\mathsf{Cheq}$}}.
\newblock In \bibinfo{editor}{Guido~\surnamestart Governatori\surnameend},
  \bibinfo{editor}{Ian~\surnamestart Hodkinson\surnameend} \&
  \bibinfo{editor}{Yde~\surnamestart Venema\surnameend}, editors: {\slshape
  \bibinfo{booktitle}{Advances in Modal Logic, 6th Conference, AiML 2006 on
  25-28 September, 2006}}, \bibinfo{publisher}{College Publications},
  \bibinfo{address}{Noosa, Queensland, Australia}, pp.
  \bibinfo{pages}{139--146}.
\newblock \urlprefix\url{https://dblp.org/rec/conf/aiml/Fontaine06}.

\bibitemdeclare{mastersthesis}{Fon07}
\bibitem{Fon07}
\bibinfo{author}{Ga{\"e}lle~\surnamestart Fontaine\surnameend} (\bibinfo{year}{2007}):
  \emph{\bibinfo{title}{Axiomatization of $\mathsf{ML}$ and $\mathsf{Cheq}$}}.
\newblock Master's thesis, \bibinfo{school}{Universiteit van Amsterdam}.
\newblock \urlprefix\url{https://eprints.illc.uva.nl/771/}.

\bibitemdeclare{inproceedings}{Gri22}
\bibitem{Gri22}
\bibinfo{author}{Gianluca~\surnamestart Grilletti\surnameend} (\bibinfo{year}{2022}):
  \emph{\bibinfo{title}{Medvedev logic is the logic of finite distributive
  lattices without top element}}.
\newblock In \bibinfo{editor}{David~\surnamestart Fern\'{a}ndez-Duque\surnameend},
  \bibinfo{editor}{Alessandra~\surnamestart Palmigiano\surnameend} \&
  \bibinfo{editor}{Sophie~\surnamestart Pinchinat\surnameend}, editors: {\slshape
  \bibinfo{booktitle}{Advances in Modal Logic, 14th Conference, AiML 2022 on
  22-25 August, 2022}}, \bibinfo{publisher}{College Publications},
  \bibinfo{address}{Rennes, Brittany, France}, pp. \bibinfo{pages}{451--466}.
\newblock \urlprefix\url{https://dblp.org/rec/conf/aiml/Grilletti22.html}.

\bibitemdeclare{article}{HLL15}
\bibitem{HLL15}
\bibinfo{author}{Joel David~\surnamestart Hamkins\surnameend},
  \bibinfo{author}{George~\surnamestart Leibman\surnameend} \&
  \bibinfo{author}{Benedikt~\surnamestart L{\"o}we\surnameend}
  (\bibinfo{year}{2015}): \emph{\bibinfo{title}{Structural connections between
  a forcing class and its modal logic}}.
\newblock {\slshape \bibinfo{journal}{Israel Journal of Mathematics}}
  \bibinfo{volume}{207}(\bibinfo{number}{2}), pp. \bibinfo{pages}{617--651},
  \doi{10.1007/s11856-015-1185-5}.
\newblock
  \urlprefix\url{https://link.springer.com/article/10.1007/s11856-015-1185-5}.

\bibitemdeclare{article}{HL08}
\bibitem{HL08}
\bibinfo{author}{Joel David~\surnamestart Hamkins\surnameend} \&
  \bibinfo{author}{Benedikt~\surnamestart L{\"o}we\surnameend}
  (\bibinfo{year}{2008}): \emph{\bibinfo{title}{The modal logic of forcing}}.
\newblock {\slshape \bibinfo{journal}{Transactions of the American Mathematical
  Society}} \bibinfo{volume}{360}(\bibinfo{number}{4}), pp.
  \bibinfo{pages}{1793--1817}, \doi{10.1090/S0002-9947-07-04297-3}.

\bibitemdeclare{article}{Kol32}
\bibitem{Kol32}
\bibinfo{author}{Andrey Nikolaevich~\surnamestart Kolmogorov\surnameend}
  (\bibinfo{year}{1932}): \emph{\bibinfo{title}{Zur {Deutung} der
  intuitionistischen {Logik}}}.
\newblock {\slshape \bibinfo{journal}{Mathematische Zeitschrift}}
  \bibinfo{volume}{35}, pp. \bibinfo{pages}{58--65}, \doi{10.1007/BF01186549}.

\bibitemdeclare{article}{Lit04}
\bibitem{Lit04}
\bibinfo{author}{Tadeusz~\surnamestart Litak\surnameend} (\bibinfo{year}{2004}):
  \emph{\bibinfo{title}{Some notes on the superintuitionistic logic of
  chequered subsets of $\mathbb{R}^{\infty}$}}.
\newblock {\slshape \bibinfo{journal}{Bulletin of the Section of Logic}}
  \bibinfo{volume}{33}(\bibinfo{number}{2}), pp. \bibinfo{pages}{81--86},
  \doi{10.48550/arXiv.1808.06393}.

\bibitemdeclare{inproceedings}{LX24}
\bibitem{LX24}
\bibinfo{author}{Benedikt~\surnamestart L{\"o}we\surnameend} \&
  \bibinfo{author}{Han~\surnamestart Xiao\surnameend} (\bibinfo{year}{2025}):
  \emph{\bibinfo{title}{Modal and intermediate logics of spiked {B}oolean
  algebras}}.
\newblock In \bibinfo{editor}{Cyriac~\surnamestart Aiswarya\surnameend},
  \bibinfo{editor}{Prabal Kumar~\surnamestart Sen\surnameend} \&
  \bibinfo{editor}{Shashi Mohan~\surnamestart Srivastava\surnameend}, editors:
  {\slshape \bibinfo{booktitle}{Logic and Its Applications, 11th Indian
  Conference, ICLA 2025 on 3-5 February, 2025}},
  \bibinfo{publisher}{Springer-Verlag}, \bibinfo{address}{Kolkata, India}, pp.
  \bibinfo{pages}{164--175}, \doi{10.1007/978-3-031-89610-1_12}.

\bibitemdeclare{article}{MSS79}
\bibitem{MSS79}
\bibinfo{author}{Larisa Lvovna~\surnamestart Maksimova\surnameend},
  \bibinfo{author}{Dmitry Pavlovich~\surnamestart Skvortsov\surnameend} \&
  \bibinfo{author}{Valentin Borisovich~\surnamestart Shehtman\surnameend}
  (\bibinfo{year}{1979}): \emph{\bibinfo{title}{Impossibility of finite
  axiomatization of {Medvedev}'s logic of finite problems}}.
\newblock {\slshape \bibinfo{journal}{Doklady Akademii Nauk SSSR}}
  \bibinfo{volume}{245}(\bibinfo{number}{5}), pp. \bibinfo{pages}{1051--1054}.
\newblock \urlprefix\url{https://www.mathnet.ru/eng/dan42643}.
\newblock \bibinfo{note}{English translation: Soviet Mathematics Doklady 20
  (1979), 394-398}.

\bibitemdeclare{article}{Med62}
\bibitem{Med62}
\bibinfo{author}{Yuri Tikhonovich~\surnamestart Medvedev\surnameend}
  (\bibinfo{year}{1962}): \emph{\bibinfo{title}{Finitive problems}}.
\newblock {\slshape \bibinfo{journal}{Doklady Akademii Nauk SSSR}}
  \bibinfo{volume}{142}(\bibinfo{number}{5}), pp. \bibinfo{pages}{1015--1018}.
\newblock \urlprefix\url{https://www.mathnet.ru/eng/dan26117}.
\newblock \bibinfo{note}{English translation: Soviet Mathematics Doklady 3
  (1962), 227-230}.

\bibitemdeclare{phdthesis}{Xiao24}
\bibitem{Xiao24}
\bibinfo{author}{Han~\surnamestart Xiao\surnameend} (\bibinfo{year}{2024}):
  \emph{\bibinfo{title}{Modal logics and intermediate logics motivated by an
  open problem on c.c.c.\ forcing}}.
\newblock Ph.D. thesis, \bibinfo{school}{Universit{\"a}t Hamburg}.
\newblock \urlprefix\url{https://ediss.sub.uni-hamburg.de/handle/ediss/11363}.

\end{thebibliography}
\end{document}